\documentclass[11pt,a4paper]{article}

\usepackage{amsmath,amssymb,amsthm}
\usepackage{cite}
\usepackage{booktabs}
\usepackage{array}
\usepackage{graphicx}
\usepackage{url}
\usepackage{float}
\usepackage{geometry}
\usepackage{tikz}
\usepackage{pgfplots}
\usetikzlibrary{arrows.meta,calc}
\usepgfplotslibrary{fillbetween}
\pgfplotsset{compat=1.17}

\allowdisplaybreaks
\newtheorem{theorem}{Theorem}[section]
\newtheorem{proposition}[theorem]{Proposition}
\newtheorem{corollary}[theorem]{Corollary}
\newtheorem{lemma}[theorem]{Lemma}
\theoremstyle{definition}

\theoremstyle{remark}

\newcommand{\dd}{\,d}
\newcommand{\R}{\mathbb{R}}
\newcommand{\C}{\mathbb{C}}
\newcommand{\diag}{\mathrm{diag}}
\newcommand{\BH}{\mathrm{BH}}

\title{Collective Onset of Matter-Induced Scalarization around a Black Hole with Two Thin Shells}

\author{Masahiro Kaminaga\\
Department of Information Technology, Faculty of Engineering,\\
Tohoku Gakuin University, Sendai, Japan \\
E-mail: kaminaga@g.tohoku-gakuin.ac.jp \\
ORCID: 0000-0001-7204-8300}

\date{}

\begin{document}

\maketitle

\begin{abstract}

We study the linear onset of matter-induced scalarization for a static
black hole surrounded by two spatially separated thin matter shells.
We assume that the conformal matter coupling is unity on the
scalar-free background and that its logarithmic derivative vanishes
there. The scalar perturbation then decouples from the metric and
matter perturbations at first order, while the invariant surface
traces of the shells give singular terms in the scalar equation.

For a static spherical black-hole exterior with no bulk scalar
effective-mass term between the shells, the onset reduces to the
finite-rank condition
$$
1-a_1-a_2+(1-\chi)a_1a_2=0,
\qquad
\chi=\frac{{\cal S}(R_2)}{{\cal S}(R_1)},
$$
where $a_j$ is the attractive strength of shell $j$ normalized by its
one-shell threshold on the same background and ${\cal S}$ is the
static radial resistance.
This relation shows that two individually subcritical shells can
collectively destabilize the scalar-free black hole.

We derive the critical scalar cloud in closed form and verify
numerically, in the Schwarzschild probe problem, that a growing mode
with a finite growth rate appears beyond the threshold.
For an exact scalar-free background consisting of three Schwarzschild
regions joined by two Israel shells, we separately derive the static
onset condition and the corresponding critical cloud.
The exact Israel trace also exhibits a sign transition whose
light-shell limit occurs at the Schwarzschild photon sphere.

\end{abstract}

\noindent
Keywords: black-hole scalarization; scalar-tensor gravity; thin matter shells; Israel junction conditions; linear scalar instability; boundary determinant.

\section{Introduction}

Matter outside a black hole can change the scalar stability of a scalar-free solution.
In scalar-tensor gravity, spontaneous scalarization was first found for compact stars by Damour and Esposito-Far\`ese \cite{Damour1993,Damour1996}.
Its linear instability is described by modes with $\omega=i\kappa$, $\kappa>0$, and the onset is their zero-frequency limit \cite{Harada1997}.
A recent comprehensive review of scalarization mechanisms, compact objects, and observational restrictions is Ref.~\cite{Doneva2024}.
In the present paper, the word ``tachyonic'' is used only in this standard linear sense.
Under the usual assumptions, isolated stationary black holes in the standard massless scalar-tensor setting satisfy no-hair results \cite{Sotiriou2012}.
The conclusion changes when matter is present outside the horizon, because the trace of the matter stress tensor gives the scalar perturbation a position-dependent effective mass \cite{Cardoso2013PRD,Cardoso2013PRL}.
Recent studies have considered smooth dark-matter environments.
Examples include scalarization and superradiant instability in a halo
\cite{Tanaka2025}, scalar clouds induced by mass-varying dark matter
\cite{Sadjadi2026}, and scalarization in a conformally coupled
dark-matter halo \cite{GasemiZad2026}.
These smooth-profile calculations and the thin-shell model considered below address complementary questions: the former emphasize particular environmental profiles, whereas the latter isolates the nonlocal cooperation of two separated trace concentrations in an analytically solvable form.

Cardoso, Carucci, Pani, and Sotiriou showed that surrounding matter can induce black-hole scalarization, and they also constructed a self-consistent spherical black hole with one thin matter shell \cite{Cardoso2013PRD,Cardoso2013PRL}.
A thin shell can therefore represent either an approximation to a
smooth matter profile or part of a consistent scalarized configuration.
Laeuger et al. studied the ringdown of a black hole surrounded by a single thin matter shell, taking account of the coupled perturbations of the spacetime and the shell \cite{Laeuger2025}.
A complementary frequency-domain study of black-hole echoes used a controlled transfer-function model to separate the homogeneous resonance spectrum from source-dependent excitation effects \cite{KaminagaEcho2026}.

In general relativity, a surface layer is described by the Israel junction conditions, which relate its surface stress tensor to the jump of the extrinsic curvature \cite{Israel1966}.
Acu\~na-C\'ardenas, Sarbach, and Tessieri studied wave propagation on exact static spacetimes formed by joining Schwarzschild regions across concentric thin matter shells \cite{AcunaCardenas2024}.
Their problem concerns wave transmission through a prescribed shell spacetime.
Here, the invariant surface traces produce singular effective mass terms, and we determine the resulting collective tachyonic onset around a central black hole.

The Israel conditions determine the piecewise background geometry and the mechanical restrictions on a static shell, but they do not by themselves determine whether the scalar-free branch has a tachyonic scalar mode.
That question requires the linear scalar equation, whose singular coefficient is fixed by the surface trace.
Can two matter layers, neither of which reaches its own one-shell scalarization threshold, destabilize the scalar-free Schwarzschild black hole collectively?
The two-shell configuration is the minimal model in which nonlocal cooperation between separated matter regions can be isolated.
The model may be regarded as the thin-layer limit of a matter trace with two separated radial peaks, as would occur in a stratified configuration.
We do not repeat the one-shell nonlinear construction.
Instead, we determine the change in the onset threshold caused by scalar propagation between the two layers.

This question is a black-hole analog of collective scalarization.
Cardoso, Foschi, and Zilh\~ao showed in flat-space models, including
concentric thin-shell configurations, that several individually
subcritical components can produce a growing mode collectively, and
they also explained why a simple averaging picture can fail
\cite{Cardoso2020}.
In the present problem, the shells surround a black-hole horizon and
their scalar strengths are fixed by the invariant surface traces.
The horizon boundary condition and the Schwarzschild radial geometry
then modify the static scalar exchange between the separated shells.
We derive the collective threshold on a scalar-free two-shell
spacetime that satisfies the Israel junction conditions exactly.

In the thin-layer limit, the integrated surface trace of each matter layer appears as a delta interaction in the radial scalar equation.
The shell strength is fixed by the invariant surface trace and the proper-distance delta distribution, rather than introduced as an arbitrary potential parameter.
The zero-frequency onset condition then reduces to a finite-dimensional boundary matrix problem.
Delta interactions supported on spheres are standard solvable models in Schr\"odinger operator theory \cite{Albeverio2005,Antoine1987,Shabani1988,Behrndt2013}.
Earlier partial-wave scattering for finitely many concentric spheres was studied in Ref.~\cite{Hounkonnou1997}, while a three-dimensional boundary-operator formulation for concentric delta shells was developed in Ref.~\cite{Kaminaga2026}.
These works provide the boundary-matrix method used below.
They do not determine the gravitational shell strengths or the
black-hole threshold kernel.

The differences from the flat-space problem are the horizon boundary
condition, the logarithmic Schwarzschild resistance, and, beyond the
probe limit, the Israel junctions.

In the Schwarzschild probe problem, the collective onset condition is obtained in closed form as
$$
1-a_1-a_2+(1-\chi)a_1a_2=0,
$$
where
$$
\chi=\frac{\log f(R_2)}{\log f(R_1)},
\qquad
f(r)=1-\frac{2M}{r}.
$$
Here $a_j$ is the attractive strength of shell $j$ normalized by its one-shell threshold on the same Schwarzschild background.
In this one-shell diagnostic, the two gravitating shells are kept in the fixed background geometry, while only the $j$th scalar surface interaction is retained.
The number $\chi$ is the static geometric overlap of the two shells.
Its logarithmic form is the effect of the Schwarzschild horizon boundary condition on the scalar field exchanged between the shells.
The formula proves that there is an open parameter region with $a_1<1$ and $a_2<1$ in which both shells are separately stable but the pair has a growing $s$-wave mode.
In the equal-strength case, the critical normalized attraction is
$$
a_{\rm c}=\frac{1}{1+\sqrt{\chi}}.
$$

We also derive the critical static scalar cloud in closed form.
The cloud is regular at the horizon, has no node, and has a $1/r$ tail at infinity.
The analytic zero-frequency result is checked by computing the finite growth rate beyond threshold.
Finally, we express the threshold in terms of the coupling $\beta$, the shell masses, their surface pressures, and the gravitational redshift.
A large reduction of the normalized threshold does not necessarily imply scalarization of a light astrophysical shell at moderate coupling.

We first consider the Schwarzschild probe limit and obtain the threshold kernel and the scalar cloud in closed form.
We then impose the two Israel junctions on the scalar-free background and derive the static onset condition.
The rank-two determinant remains valid after the metric junctions are included.
The probe formula is recovered at leading order for small shell masses.
What remains outside the present paper is the nonlinear scalarized branch, together with a material equation of state and a radial-stability analysis for the shells.

Figure~\ref{fig:geometry} shows the arrangement.
The two delta interactions are outside the horizon and represent the
integrated stress traces of thin matter layers.

\begin{figure}[htbp]
\centering
\begin{tikzpicture}[x=0.9cm,y=0.9cm,>=Latex]

 % axis
 \draw[->] (0,0)--(12.2,0) node[right] {$r$};

 % horizon
 \fill (1.2,0) circle (2.2pt);
 \draw[dashed] (1.2,-0.9)--(1.2,0.9);
 \node[below] at (1.2,-0.05) {$r_{\rm h}=2M_0$};
 \node[above left] at (1.05,0.1) {horizon};

 % shells
 \draw[line width=1.1pt] (4.4,-0.9)--(4.4,0.9);
 \draw[line width=1.1pt] (8.3,-0.9)--(8.3,0.9);

 % shifted labels to avoid overlap with the vertical lines
 \node[below right] at (4.05,-0.9) {$R_1$};
 \node[below right] at (8.05,-0.9) {$R_2$};

 \node[above left]  at (5.45,0.9) {inner shell};
 \node[above left]  at (9.35,0.9) {outer shell};

 % region labels
 \node at (2.75,-0.65) {$M_0$};
 \node at (6.35,-0.65) {$M_1$};
 \node at (10.25,-0.65) {$M_2$};

\end{tikzpicture}

\caption{Radial schematic of a central black hole and two concentric thin shells.
In the exact scalar-free construction, the two shells divide the exterior into
three vacuum Schwarzschild regions with mass parameters $M_0$, $M_1$, and
$M_2$. The lapse normalizations are chosen so that the induced metric is
continuous across both shells. The Schwarzschild probe problem is recovered as
the leading small-shell-mass limit $M_1,M_2\to M_0=M$.}
\label{fig:geometry}
\end{figure}

\section{Model, probe limit, and thin shell reduction}

\subsection{Linear scalar equation}
\label{subsec:linear-scalar}

We use units $G=c=1$, the metric signature $(-,+,+,+)$, and $\Box=g^{\mu\nu}\nabla_\mu\nabla_\nu$.
The scalar field is taken to be dimensionless in the normalization below.
The Einstein-frame action is
\begin{equation}
S=\frac{1}{16\pi}\int \sqrt{-g}
\left\{R-2g^{\mu\nu}\nabla_\mu\Phi\nabla_\nu\Phi\right\}\dd^4x
+S_{\rm m}\left[\Psi,A^2(\Phi)g_{\mu\nu}\right].
\label{eq:action}
\end{equation}
Here $\Phi$ is the scalar field, $\Psi$ denotes the matter fields, and $A(\Phi)$ is the matter coupling function.
The matter fields are minimally coupled to the Jordan-frame metric
$$
\widetilde g_{\mu\nu}=A^2(\Phi)g_{\mu\nu}.
$$
The Einstein-frame stress tensor and its trace are defined by
$$
T_{\mu\nu}=-\frac{2}{\sqrt{-g}}\frac{\delta S_{\rm m}}{\delta g^{\mu\nu}},
\qquad
T=g^{\mu\nu}T_{\mu\nu}.
$$
Define
$$
\alpha(\Phi)=\frac{\dd\log A(\Phi)}{\dd\Phi}.
$$
Here $\log$ denotes the natural logarithm.
Variation of Eq.~\eqref{eq:action} with respect to $\Phi$ gives
\begin{equation}
\Box\Phi=-4\pi\alpha(\Phi)T.
\label{eq:scalarfull}
\end{equation}
The trace $T$ in the linear problem is evaluated on the scalar-free background.
We impose $A(0)=1$ and $\alpha(0)=0$, so the Einstein and Jordan metrics agree on that background and $\Phi=0$ is a solution.
All shell quantities used below are evaluated on this scalar-free background, where the two frame metrics coincide.
At first order, the scalar sector is autonomous.
Indeed, $\delta(\Box\Phi)=\Box_0\varphi$ because $\Phi_0=0$, the scalar stress tensor is quadratic in $\varphi$, and the variation of the matter source is
$$
-4\pi\,\delta\{\alpha(\Phi)T\}
=-4\pi\{\beta T_0\varphi+\alpha(0)\delta T\}
=-4\pi\beta T_0\varphi.
$$
Thus metric and matter perturbations are not external sources for $\varphi$ at linear order.
They may be studied separately, but they are not required for the scalar onset problem considered here.
Let
$$
\beta=\left.\frac{\dd\alpha}{\dd\Phi}\right|_{\Phi=0}.
$$
Then
$$
\alpha(\Phi)=\beta\Phi+O(\Phi^2)
$$
near $\Phi=0$.
A small perturbation $\varphi$ therefore satisfies
\begin{equation}
\left(\Box-\mu_{\rm eff}^2\right)\varphi=0,
\qquad
\mu_{\rm eff}^2=-4\pi\beta T.
\label{eq:linearscalar}
\end{equation}
A region with $\mu_{\rm eff}^2<0$ gives an attractive contribution to the radial Schr\"odinger operator and can produce a growing scalar mode.
This is the linear-stability meaning of the word ``tachyonic'' in this paper.
The zeroth-order metric is Schwarzschild,
\begin{equation}
\dd s^2=-f(r)\dd t^2+f(r)^{-1}\dd r^2+r^2\dd\Omega^2,
\qquad
f(r)=1-\frac{2M}{r},
\label{eq:schwarzschild}
\end{equation}
where $\dd\Omega^2$ is the metric of the unit two-sphere and the event horizon is at $r=2M$.

\subsection{Probe scaling and physical status of the shells}

Let $m_j$ be the total proper energy of shell $j$ and define
\begin{equation}
\epsilon_{\rm sh}=\frac{m_1+m_2}{M}.
\label{eq:epsshell}
\end{equation}
We assume $\epsilon_{\rm sh}\ll1$ and retain only the leading Schwarzschild metric,
\begin{equation}
g_{\mu\nu}=g_{\mu\nu}^{\rm Schw}+O(\epsilon_{\rm sh}).
\label{eq:metricprobe}
\end{equation}
The stress tensor of the scalar perturbation is quadratic in its amplitude and does not change the metric at linear order.
The shell metric backreaction is of order $\epsilon_{\rm sh}$, but the linear scalar equation retains the product of $\beta$ and the shell trace.
For shells with radii of order $M$ and with $|\tau_j|$ of the same order as the surface energy density, the dimensionless delta strength derived below satisfies
$$
\lambda_jM=O\left(|\beta|\frac{m_j}{M}\right).
$$
Thus $\epsilon_{\rm sh}$ controls the metric correction, whereas $|\beta|\epsilon_{\rm sh}$ controls the scalar attraction.
We use the formal double scaling
$$
\epsilon_{\rm sh}\ll1,
\qquad
|\beta|\epsilon_{\rm sh}=O(1),
$$
so the geometry is Schwarzschild at leading order while the shell term in the scalar equation remains finite.
This is an operator-level probe limit for the linear onset problem, not a weak-coupling expansion.
More precisely, it describes a family in which the shell mass ratios decrease while the scalar-matter coupling is increased so that the integrated scalar attraction remains fixed.
It is therefore distinct from sending the shell masses to zero at fixed $\beta$ in a single scalar-tensor theory.
The omitted piecewise-metric corrections change the entries of the threshold Green matrix by $O(\epsilon_{\rm sh})$.
Provided that the first critical root is simple, the corresponding critical coupling or critical curve is shifted by $O(\epsilon_{\rm sh})$.
Section~\ref{sec:exact-background} verifies this statement on an exact background with two Israel shells.
The double scaling does not imply that an arbitrarily light shell scalarizes for observationally allowed values of $\beta$.

A freely falling dust shell cannot remain at a fixed Schwarzschild radius.
The shells in this model are prescribed static layers with surface density and tangential pressure.
Any material stress used to support a shell is understood to be included in its surface stress tensor.
Alternatively, an external constraint may be regarded as part of the prescribed background and is not varied.
Cardoso et al.\ treated the single-shell problem with Israel matching and a self-consistent piecewise Schwarzschild geometry \cite{Cardoso2013PRL}.
The exact scalar-free two-shell geometry, containing three vacuum regions and two sets of junction conditions, is constructed in Sec.~\ref{sec:exact-background}.
The leading Schwarzschild probe problem gives the threshold and critical cloud in elementary closed form.
The nonlinear scalarized two-shell branch is not constructed here.

\subsection{Relation to Israel junction conditions}

A thin shell plays two different roles in the present problem.
Let $\Sigma_j$ be a non-null hypersurface separating an inner region
($-$) from an outer region ($+$).  The Israel junction conditions are

\begin{equation}
[K_{ab}]_j-\gamma_{ab}[K]_j
=-8\pi S_{ab}^{(j)},
\label{eq:israeljunction}
\end{equation}
where $\gamma_{ab}$ is the metric induced on $\Sigma_j$,
$K_{ab}^{\pm}$ are the extrinsic curvatures evaluated on the two sides
of the shell, and
$$
K^{\pm}=\gamma^{ab}K_{ab}^{\pm}.
$$
For any quantity $X$ with limiting values on the two sides of the shell,
we define its jump by
$$
[X]_j:=X_j^{+}-X_j^{-},
$$
where $+$ and $-$ denote the outer and inner sides, respectively.
The tensor $S_{ab}^{(j)}$ is the surface stress tensor of shell $j$.
Equation~\eqref{eq:israeljunction} is the distributional Einstein equation for the metric.
For two spherical shells it produces three piecewise Schwarzschild regions, with mass parameters and time normalizations related by the two junctions.
This exact scalar-free geometry is written explicitly in Sec.~\ref{sec:exact-background}.
A material equation of state is still needed to decide whether the prescribed radii are radially stable.

The scalar matching condition used in the present paper is different.
On the scalar-free branch, the trace $\tau_j=S^a{}_a$ enters Eq.~\eqref{eq:linearscalar} and produces the derivative jump in Eq.~\eqref{eq:jump}.
The Israel conditions fix the metric jump data generated by the full tensor $S_{ab}^{(j)}$, whereas the linear scalar equation selects its trace and fixes the scalar derivative jump.
Therefore, replacing the delta interactions by Israel matching alone would omit the scalar instability mechanism.
Conversely, the scalar jump condition does not replace the Israel conditions in a self-consistent spacetime construction.

The probe limit gives a precise separation between these two effects.
Equation~\eqref{eq:israeljunction} implies metric and extrinsic-curvature corrections of order $\epsilon_{\rm sh}$, which are omitted from the leading background Green function.
The scalar jump is of order $|\beta|\epsilon_{\rm sh}$ and is retained in the double scaling stated above.
Thus the use of one Schwarzschild metric on both sides of each shell is not an exact solution of the Einstein equation with a nonzero surface stress tensor; it is the leading term of the probe expansion.
At this order, inserting the $O(\epsilon_{\rm sh})$ piecewise-metric corrections into the scalar Green function would change the onset determinant only beyond the retained background order, while the delta strengths remain finite by construction.
The result below gives the scalar onset condition to leading order in this double scaling.
Section~\ref{sec:exact-background} then imposes the Israel geometry and the scalar jump simultaneously on the exact scalar-free branch.

\subsection{Distributional trace and radial jump}

Let the shells be at $2M<R_1<R_2$.
Define the outward proper radial distance $\zeta$ by $\dd\zeta=\dd r/\sqrt{f(r)}$, and let $\zeta_j=\zeta(R_j)$.
For a static spherical shell, the surface stress tensor in an orthonormal frame tangent to the shell is
$$
S^a{}_b=\diag(-\sigma_j,p_j,p_j).
$$
Its surface trace is
\begin{equation}
\tau_j=S^a{}_a=-\sigma_j+2p_j.
\label{eq:surfacetrace}
\end{equation}
The corresponding spacetime trace is the surface distribution
\begin{equation}
T=\sum_{j=1}^2\tau_j\delta_{\Sigma_j},
\label{eq:trace}
\end{equation}
where $\delta_{\Sigma_j}$ denotes the invariant delta distribution
supported on the shell $\Sigma_j$, defined by
$$
\int \delta_{\Sigma_j} F\sqrt{-g}\dd^4x
=
\int_{\Sigma_j}F\sqrt{-h_j}\dd^3y
$$
for every smooth compactly supported test function $F$.
Here $h_j$ is the induced metric on the world volume $\Sigma_j$ of
shell $j$.
In the proper radial coordinate $\zeta$, one has
$$
\delta_{\Sigma_j}=\delta(\zeta-\zeta_j),
$$
so that Eq.~\eqref{eq:trace} becomes
$$
T=\sum_{j=1}^2\tau_j\delta(\zeta-\zeta_j).
$$
This is the standard invariant normalization of a surface distribution
in the Israel formalism \cite{Israel1966}.
A trace-free shell with $\sigma_j=2p_j$ does not couple to the linear
scalar perturbation.

We introduce the tortoise coordinate $x=r_*$ by
\begin{equation}
\frac{\dd x}{\dd r}=\frac{1}{f(r)},
\qquad
x=r+2M\log\left(\frac{r}{2M}-1\right),
\label{eq:tortoise}
\end{equation}
where the additive constant has been fixed as shown.
Since
$$
\dd\zeta=\frac{\dd r}{\sqrt{f(r)}}=\sqrt{f(r)}\dd x,
$$
the delta distribution satisfies
\begin{equation}
\delta(\zeta-\zeta_j)
=\sqrt{f_j}\delta(r-R_j)
=\frac{1}{\sqrt{f_j}}\delta(x-x_j),
\qquad
f_j=f(R_j),
\qquad
x_j=r_*(R_j).
\label{eq:deltaconvert}
\end{equation}

We decompose the perturbation as
\begin{equation}
\varphi(t,r,\theta,\phi)
=\sum_{\ell,m}e^{-i\omega t}Y_{\ell m}(\theta,\phi)\frac{u_{\ell m}(x)}{r},
\label{eq:decomp}
\end{equation}
where the spherical harmonics are orthonormal on the unit two-sphere.
The radial operator does not depend on $m$, and this index will be suppressed when no confusion can arise.
Before the thin-shell substitution, Eq.~\eqref{eq:linearscalar} gives
\begin{equation}
-\frac{\dd^2u_{\ell m}}{\dd x^2}
+f(r)\left\{\frac{\ell(\ell+1)}{r^2}
+\frac{2M}{r^3}
+\mu_{\rm eff}^2(r)\right\}u_{\ell m}
=\omega^2u_{\ell m}.
\label{eq:radialbefore}
\end{equation}
Equations~\eqref{eq:linearscalar}, \eqref{eq:trace}, and \eqref{eq:deltaconvert} give
\begin{equation}
f(r)\mu_{\rm eff}^2(r)
=\sum_{j=1}^2\lambda_j\delta(x-x_j),
\qquad
\lambda_j=-4\pi\beta\tau_j\sqrt{f_j}.
\label{eq:singularmass}
\end{equation}
Consequently,
\begin{equation}
H_\ell u_{\ell m}=\omega^2u_{\ell m},
\label{eq:radialeigen}
\end{equation}
where
\begin{equation}
H_\ell=-\frac{\dd^2}{\dd x^2}+V_\ell^{\BH}(x)
+\lambda_1\delta(x-x_1)+\lambda_2\delta(x-x_2)
\label{eq:radialoperator}
\end{equation}
and
\begin{equation}
V_\ell^{\BH}(x)
=f(r)\left\{\frac{\ell(\ell+1)}{r^2}+\frac{2M}{r^3}\right\}.
\label{eq:bhpotential}
\end{equation}
The superscript $\BH$ denotes the shell-free black-hole operator.
The Schwarzschild part is the standard effective potential for a massless scalar field on Schwarzschild spacetime \cite{Cardoso2013PRD,Berti2009}.
The factor $\sqrt{f_j}$ in Eq.~\eqref{eq:singularmass} is the product of the factor $f$ in the radial mass term and the Jacobian $1/\sqrt{f_j}$ in Eq.~\eqref{eq:deltaconvert}.
It is fixed by the invariant surface trace and is not an independent potential parameter.

Integration of Eq.~\eqref{eq:radialeigen} across $x=x_j$ gives
\begin{eqnarray}
u(x_j+0)&=&u(x_j-0),
\label{eq:continuity}\\
u'(x_j+0)-u'(x_j-0)&=&\lambda_j u(x_j),
\label{eq:jump}
\end{eqnarray}
where a prime denotes differentiation with respect to $x$.
In units $G=c=1$, both $\tau_j$ and $\lambda_j$ have dimension of inverse length.
For pressureless matter, $\tau_j=-\sigma_j$.
The standard scalarizing sign $\beta<0$ then gives $\lambda_j<0$, so a dust shell is attractive.

\subsection{Self-adjoint radial operator and instability}

The background potential is nonnegative and tends to zero at both ends of the tortoise line.
The quadratic form of $H_\ell$ is
\begin{equation}
q_\ell[u]=\int_{\R}\left\{|u'(x)|^2+V_\ell^{\BH}(x)|u(x)|^2\right\}\dd x
+\lambda_1|u(x_1)|^2+\lambda_2|u(x_2)|^2,
\label{eq:quadraticform}
\end{equation}
with form domain $H^1(\R)$.
For every $\varepsilon>0$, point evaluation satisfies an estimate of the form
$$
|u(x_j)|^2
\leq \varepsilon\int_{\R}|u'(x)|^2\dd x
+C_{\varepsilon,j}\int_{\R}|u(x)|^2\dd x.
$$
The shell terms are therefore form bounded with relative bound zero.
It follows that $q_\ell$ is closed and lower bounded and defines a self-adjoint operator.
Multiplication by $V_\ell^{\BH}$ is relatively compact with respect to $-\dd^2/\dd x^2$, and the two point interactions give a finite-rank resolvent difference.
Hence the essential spectrum is $[0,\infty)$, and the negative spectrum consists of isolated eigenvalues of finite multiplicity.

With the time dependence in Eq.~\eqref{eq:decomp}, a negative eigenvalue
$$
\omega^2=-\kappa^2,
\qquad
\kappa>0,
$$
produces a solution proportional to $e^{\kappa t}$.
The corresponding radial function is square integrable and satisfies
\begin{eqnarray}
u(x)&\sim&e^{\kappa x},
\qquad x\longrightarrow-\infty,
\label{eq:leftdecay}\\
u(x)&\sim&e^{-\kappa x},
\qquad x\longrightarrow+\infty.
\label{eq:rightdecay}
\end{eqnarray}
Near the future horizon, the full mode is proportional to $e^{\kappa(t+x)}$, which is regular in the ingoing coordinate $v=t+x$.
Thus a negative eigenvalue of $H_\ell$ is exactly a growing scalar mode in the present linear problem.

\section{Boundary matrix and unstable spectrum}

\subsection{Schwarzschild Green function}

Let
$$
H_\ell^{\BH}=-\frac{\dd^2}{\dd x^2}+V_\ell^{\BH}(x)
$$
be the shell-free radial operator.
For $\kappa>0$, let $G_\ell(x,y;-\kappa^2)$ be the kernel of $(H_\ell^{\BH}+\kappa^2)^{-1}$.
Let $p_{\ell,\kappa}$ and $q_{\ell,\kappa}$ be the positive solutions of
$$
\left(H_\ell^{\BH}+\kappa^2\right)w=0
$$
normalized by
$$
\lim_{x\to-\infty}e^{-\kappa x}p_{\ell,\kappa}(x)=1,
\qquad
\lim_{x\to+\infty}e^{\kappa x}q_{\ell,\kappa}(x)=1.
$$
Their Wronskian
$$
W_{\ell,\kappa}
=p_{\ell,\kappa}q'_{\ell,\kappa}
-p'_{\ell,\kappa}q_{\ell,\kappa}
$$
is independent of $x$.
The Green kernel is
\begin{equation}
G_\ell(x,y;-\kappa^2)
=-\frac{p_{\ell,\kappa}(x_{<})q_{\ell,\kappa}(x_{>})}{W_{\ell,\kappa}},
\qquad
x_{<}=\min\{x,y\},
\qquad
x_{>}=\max\{x,y\}.
\label{eq:greennegative}
\end{equation}
The one-dimensional Schr\"odinger semigroup for $H_\ell^{\BH}$ is positivity improving, so its resolvent kernel is strictly positive.

Let $I_n$ denote the $n\times n$ identity matrix.
We define
\begin{equation}
\mathsf M_\ell(\kappa)
=\left(
\begin{array}{cc}
G_\ell(x_1,x_1;-\kappa^2)&G_\ell(x_1,x_2;-\kappa^2)\\
G_\ell(x_2,x_1;-\kappa^2)&G_\ell(x_2,x_2;-\kappa^2)
\end{array}
\right)
\label{eq:Mmatrix}
\end{equation}
and
\begin{equation}
\Lambda=\left(
\begin{array}{cc}
\lambda_1&0\\
0&\lambda_2
\end{array}
\right),
\qquad
K_\ell(\kappa)=I_2+\mathsf M_\ell(\kappa)\Lambda.
\label{eq:Kmatrix}
\end{equation}
For $z\in\C\setminus[0,\infty)$, we also write
\begin{equation}
{\mathcal K}_\ell(z)
=I_2+{\mathcal M}_\ell(z)\Lambda,
\qquad
[{\mathcal M}_\ell(z)]_{ij}=G_\ell(x_i,x_j;z),
\label{eq:Kz}
\end{equation}
where $G_\ell(x,y;z)$ is the kernel of $(H_\ell^{\BH}-z)^{-1}$.
Thus $K_\ell(\kappa)={\mathcal K}_\ell(-\kappa^2)$.
This matrix is the Schwarzschild counterpart of the reduced boundary matrices for concentric delta shells in flat space \cite{Kaminaga2026}.

\begin{theorem}[Boundary determinant]
\label{thm:boundary-determinant}
Let $\kappa>0$.
Then
\begin{equation}
\dim\ker(H_\ell+\kappa^2)=\dim\ker K_\ell(\kappa).
\label{eq:kernelmultiplicity}
\end{equation}
In particular, $-\kappa^2$ is an eigenvalue of $H_\ell$ if and only if
\begin{equation}
D_\ell(\kappa):=\det K_\ell(\kappa)=0.
\label{eq:detcondition}
\end{equation}
If $K_\ell(\kappa)$ is invertible, the resolvent kernel $G_{\ell,\Lambda}$ of $H_\ell$ is
\begin{eqnarray}
G_{\ell,\Lambda}(x,y;-\kappa^2)
&=&G_\ell(x,y;-\kappa^2)
\nonumber\\
&&-\sum_{i,j=1}^2
G_\ell(x,x_i;-\kappa^2)\lambda_i
\left[K_\ell(\kappa)^{-1}\right]_{ij}
G_\ell(x_j,y;-\kappa^2).
\label{eq:resolventformula}
\end{eqnarray}
\end{theorem}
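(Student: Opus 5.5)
The plan is the standard Birman--Schwinger (Krein) reduction for finitely many point interactions, carried out with the explicit Schwarzschild Green kernel \eqref{eq:greennegative}. The kernel identity \eqref{eq:kernelmultiplicity} will follow from two mutually inverse linear maps between $\ker(H_\ell+\kappa^2)$ and $\ker K_\ell(\kappa)$. The resolvent formula \eqref{eq:resolventformula} will then follow from a direct check that the proposed kernel inverts $H_\ell+\kappa^2$.

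\emph{From eigenfunctions to kernel vectors.} Let $u\in\ker(H_\ell+\kappa^2)$. Then $u\in H^1(\R)$, and it satisfies $(H_\ell^{\BH}+\kappa^2)u=0$ away from $x_1,x_2$, together with the matching conditions \eqref{eq:continuity}--\eqref{eq:jump}. Equivalently, in the distributional sense,
$$
(H_\ell^{\BH}+\kappa^2)u=-\sum_j\lambda_j u(x_j)\delta(x-x_j).
$$
Since $-\kappa^2$ lies in the resolvent set of $H_\ell^{\BH}$ (the operator is nonnegative), we may apply the resolvent. This gives
$$
u(x)=-\sum_j G_\ell(x,x_j;-\kappa^2)\lambda_j u(x_j).
$$
To justify this step, I would verify that the right-hand side is an $H^1$ function with the same jumps as $u$. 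The difference of the two sides is then an $L^2$ solution of the smooth equation $(H_\ell^{\BH}+\kappa^2)w=0$, and such a solution vanishes. Evaluating at $x=x_i$ gives $c+\mathsf M_\ell\Lambda c=0$ with $c=(u(x_1),u(x_2))^{T}$, so $c\in\ker K_\ell(\kappa)$. The map $u\mapsto c$ is injective: if $c=0$, the representation formula forces $u=0$.

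\emph{From kernel vectors to eigenfunctions.} Conversely, given $c\in\ker K_\ell(\kappa)$, define
$$
u(x)=-\sum_j G_\ell(x,x_j)\lambda_j c_j .
$$
From \eqref{eq:greennegative}, $u$ is built piecewise from $p_{\ell,\kappa}$ and $q_{\ell,\kappa}$. It therefore decays as in \eqref{eq:leftdecay}--\eqref{eq:rightdecay}, is continuous, and lies in $H^1(\R)$. The Green function jump $\partial_xG(x_j+0,x_j)-\partial_xG(x_j-0,x_j)=-1$ follows from the Wronskian normalization, and it gives $u'(x_j+0)-u'(x_j-0)=\lambda_jc_j$. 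The condition $K_\ell c=0$ says exactly that $u(x_j)=c_j$, so the jump equals $\lambda_j u(x_j)$, which is \eqref{eq:jump}. To conclude that $u$ lies in the operator domain, I would check via the quadratic form \eqref{eq:quadraticform} that $q_\ell[v,u]=-\kappa^2\langle v,u\rangle$ for all $v\in H^1$; this uses integration by parts on each of the three intervals. The two maps are mutually inverse and linear, which proves \eqref{eq:kernelmultiplicity}, and \eqref{eq:detcondition} is immediate from it.

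\emph{Resolvent formula.} For invertible $K_\ell(\kappa)$ and $g\in L^2$, I would set
$$
v=R_0g-\sum_{i,j}G(\cdot,x_i)\lambda_i[K^{-1}]_{ij}(R_0g)(x_j),
\qquad R_0=(H_\ell^{\BH}+\kappa^2)^{-1}.
$$
The same jump computation as above, combined with the identity $(R_0g)(x_i)-\sum_j(\mathsf M\Lambda K^{-1})_{ij}(R_0g)(x_j)=(K^{-1}R_0g(x_\cdot))_i$, shows that $v$ satisfies $v'(x_i+0)-v'(x_i-0)=\lambda_iv(x_i)$ and $(H_\ell+\kappa^2)v=g$. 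Invertibility of $H_\ell+\kappa^2$ then follows from the first part, since its kernel is trivial and $-\kappa^2$ is not in the essential spectrum $[0,\infty)$.

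The main obstacle is the domain identification: showing that the functions built from the Green kernel belong to the form-defined self-adjoint operator, rather than merely solving the equation piecewise. I would handle this uniformly through the weak form identity $q_\ell[v,u]=\langle v,(-\kappa^2 u+g)\rangle$, which makes the jump conditions the natural boundary terms.
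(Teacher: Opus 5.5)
Your proposal is correct and follows essentially the same route as the paper's proof. In both, the shell-value map $u\mapsto(u(x_1),u(x_2))$ and the Green representation $u=-\sum_j\lambda_jc_jG_\ell(\cdot,x_j;-\kappa^2)$ are mutually inverse between $\ker(H_\ell+\kappa^2)$ and $\ker K_\ell(\kappa)$, and the resolvent formula comes from solving $K_\ell(\kappa)c=(u_0(x_1),u_0(x_2))^{\mathrm T}$. Your additions are the weak-form check $q_\ell[v,u]=-\kappa^2\langle v,u\rangle$ for domain membership and a direct verification of the resolvent by construction, rather than the paper's necessity argument; these supply details the paper leaves implicit.
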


\begin{proof}
Let $f\in L^2(\R)$ and put $u_0=(H_\ell^{\BH}+\kappa^2)^{-1}f$.
A solution of $(H_\ell+\kappa^2)u=f$ must satisfy
$$
u(x)=u_0(x)-\sum_{j=1}^2\lambda_jc_jG_\ell(x,x_j;-\kappa^2),
$$
where $c_j=u(x_j)$.
The Green kernel is continuous, and its derivative with respect to the first variable has jump $-1$ at the source point.
The minus sign in the representation of $u$ therefore gives the jump in Eq.~\eqref{eq:jump}.
Taking the values at the two shell points gives
$$
K_\ell(\kappa)
\left(
\begin{array}{c}
c_1\\
c_2
\end{array}
\right)
=
\left(
\begin{array}{c}
u_0(x_1)\\
u_0(x_2)
\end{array}
\right).
$$
If $K_\ell(\kappa)$ is invertible, substitution of this solution for the shell values gives Eq.~\eqref{eq:resolventformula}.

If $c$ is a nonzero vector in $\ker K_\ell(\kappa)$, then
$$
u(x)=-\sum_{j=1}^2\lambda_jc_jG_\ell(x,x_j;-\kappa^2)
$$
is square integrable, satisfies the jump conditions, and has $u(x_i)=c_i$.
It is therefore a nonzero element of $\ker(H_\ell+\kappa^2)$.
Conversely, the two shell values of any element of $\ker(H_\ell+\kappa^2)$ form a vector in $\ker K_\ell(\kappa)$.
These two maps are inverse to each other, which proves Eq.~\eqref{eq:kernelmultiplicity} and the determinant condition.
\end{proof}

For two shells, the determinant is
\begin{equation}
D_\ell(\kappa)
=\left(1+\lambda_1G_{11}\right)
\left(1+\lambda_2G_{22}\right)
-\lambda_1\lambda_2G_{12}^2,
\label{eq:generalD}
\end{equation}
where $G_{ij}=G_\ell(x_i,x_j;-\kappa^2)$.
The last term is the interaction term that allows collective scalarization.

\subsection{The first unstable sector and the number of modes}

\begin{theorem}[Angular momentum ordering]
Assume that $\lambda_1$ and $\lambda_2$ are real.
If $H_\ell$ has a negative eigenvalue for some $\ell\geq1$, then $H_0$ has a negative eigenvalue.
For each fixed $\ell$, the operator $H_\ell$ has at most two negative radial eigenvalues, counted with multiplicity.
\end{theorem}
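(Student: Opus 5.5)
The proof uses quadratic forms for the first claim and the finite-rank perturbation structure of Theorem~\ref{thm:boundary-determinant} for the second.

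\emph{Angular momentum ordering.} For $\ell\geq1$ we have
$$
V_\ell^{\BH}(x)-V_0^{\BH}(x)=f(r)\,\ell(\ell+1)/r^2\geq0 .
$$
Both quadratic forms $q_\ell$ and $q_0$ in Eq.~\eqref{eq:quadraticform} are defined on the same form domain $H^1(\R)$, and the point-interaction terms $\lambda_1|u(x_1)|^2+\lambda_2|u(x_2)|^2$ do not depend on $\ell$. Hence $q_0[u]\leq q_\ell[u]$ for every $u\in H^1(\R)$.

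Suppose $H_\ell$ has a negative eigenvalue with eigenfunction $u$. Then $q_0[u]\leq q_\ell[u]<0$. By the variational (min--max) principle,
$$
\inf\sigma(H_0)\leq q_0[u]/\|u\|^2<0 .
$$
We showed above that the essential spectrum of $H_0$ is $[0,\infty)$. The bottom of the spectrum of $H_0$ is therefore an isolated negative eigenvalue. The reality of $\lambda_j$ is used only to make both forms real and the operators self-adjoint.

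\emph{At most two negative eigenvalues.} The shell-free operator $H_\ell^{\BH}$ has form $\int\{|u'|^2+V_\ell^{\BH}|u|^2\}\geq0$ because $V_\ell^{\BH}\geq0$. It therefore has no negative spectrum.

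Consider the closed subspace
$$
\mathcal V=\{u\in H^1(\R):u(x_1)=u(x_2)=0\},
$$
which has codimension two in $H^1(\R)$ since point evaluations are bounded on $H^1$. On $\mathcal V$ we have $q_\ell[u]=q_\ell^{\BH}[u]\geq0$.

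Now suppose $H_\ell$ had three linearly independent negative eigenfunctions, counted with multiplicity. Their span $E$ is a three-dimensional subspace of the form domain on which $q_\ell$ is negative definite. Since $\mathcal V$ has codimension two, $E\cap\mathcal V\neq\{0\}$. A nonzero $u$ in this intersection would satisfy both $q_\ell[u]<0$ and $q_\ell[u]\geq0$, which is a contradiction. This is the min--max form of Glazman's lemma. Hence $H_\ell$ has at most two negative eigenvalues.

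The same bound also follows from Theorem~\ref{thm:boundary-determinant}. There, $\dim\ker(H_\ell+\kappa^2)=\dim\ker K_\ell(\kappa)\leq2$ controls the multiplicity of each eigenvalue. The count of distinct eigenvalues, however, is most cleanly obtained from the codimension argument, so I would rely on that.

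\emph{Main obstacle.} The main technical point is justifying that the negative spectrum is discrete and that min--max applies to the closed semibounded form $q_\ell$. Both facts were already established before the theorem through the form-boundedness of the point evaluations and the identification of the essential spectrum. What remains is to check that the codimension-two subspace argument is compatible with complex-valued $H^1$ functions. It is, because $u\mapsto(u(x_1),u(x_2))$ is a bounded linear map $H^1(\R)\to\C^2$.
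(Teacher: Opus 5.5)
Your proposal is correct and follows essentially the same route as the paper. For the first claim you use the pointwise ordering $q_0\leq q_\ell$ together with min--max, and for the second you note that a three-dimensional negative subspace must meet the kernel of $u\mapsto(u(x_1),u(x_2))$, where $q_\ell$ reduces to the nonnegative shell-free form. Your explicit appeal to $\sigma_{\mathrm{ess}}(H_0)=[0,\infty)$, which turns the negative infimum into an actual eigenvalue, fills in a step the paper leaves implicit.
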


\begin{proof}
The background potentials satisfy
$$
V_\ell^{\BH}(x)=V_0^{\BH}(x)
+f(r)\frac{\ell(\ell+1)}{r^2}.
$$
Hence $q_\ell[u]\geq q_0[u]$ for every $u\in H^1(\R)$.
If $q_\ell$ is negative on a nonzero vector, then $q_0$ is negative on the same vector.
The min-max principle therefore proves the first assertion.

Suppose that one angular sector had at least three negative radial eigenvalues.
The corresponding negative spectral subspace would contain a three-dimensional subspace on which $q_\ell[u]<0$ for every nonzero $u$.
The linear map
$$
u\longmapsto\left(u(x_1),u(x_2)\right)
$$
from this subspace to $\C^2$ has a nonzero kernel vector.
For that vector, both shell terms in Eq.~\eqref{eq:quadraticform} vanish, while the background part is nonnegative.
This contradicts $q_\ell[u]<0$.
Thus the number of negative radial eigenvalues is at most two.
\end{proof}

\begin{corollary}
Suppose that the magnitudes of the attractive shell strengths are increased monotonically from zero along a fixed one-parameter path.
The first loss of nonnegativity occurs in the $s$-wave sector, possibly at the same parameter value as another sector.
Within each fixed angular momentum sector, there are at most two radial growth rates.
\end{corollary}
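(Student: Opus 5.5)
The corollary follows directly from the Angular momentum ordering theorem, combined with a continuity argument along the path. Parametrize the path by $s\in[0,s_{\max}]$, with real strengths $\lambda_j(s)$ satisfying $\lambda_j(0)=0$. At $s=0$ every $H_\ell$ reduces to $H_\ell^{\BH}$. Its quadratic form is nonnegative because $V_\ell^{\BH}\geq0$, so no sector has negative spectrum at the start. Define the first loss of nonnegativity as
$$
s_*=\inf\{s:\ H_\ell \text{ has a negative eigenvalue for some }\ell\}.
$$
The claim to prove is that for every $s$ at which some sector $\ell\geq1$ has negative spectrum, the $s$-wave sector has negative spectrum as well. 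Taking infima then shows that $s_*$ is also the infimum over the $\ell=0$ sector alone.

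The key step is to apply the first assertion of the Angular momentum ordering theorem pointwise in $s$. For each fixed $s$ the strengths are real. The theorem therefore says that a negative eigenvalue of $H_\ell$ with $\ell\geq1$ forces a negative eigenvalue of $H_0$ at the same $s$. Consequently
$$
\{s:\ \exists\,\ell\geq1,\ \inf\sigma(H_\ell)<0\}\subset\{s:\ \inf\sigma(H_0)<0\},
$$
so the first onset parameter of the $s$-wave sector equals $s_*$. The qualifier ``possibly at the same parameter value as another sector'' is exactly what this inclusion permits. Another sector may reach its own threshold at the same $s_*$, but none can reach it strictly earlier, since $q_\ell\geq q_0$ holds form-wise.

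The main subtlety is the meaning of ``first loss'' when the negative set is not closed. It should be read as the infimum above, so no attainment is needed. Monotonicity ensures that this infimum is a genuine threshold: once negativity appears it persists. To see this, take a test function $u$ with $q_\ell[u]<0$. Increasing $|\lambda_j|$ along attractive strengths ($\lambda_j\leq0$) only decreases $\lambda_j|u(x_j)|^2$, so $q_\ell[u]$ stays negative. The last sentence of the corollary is the second assertion of the theorem, applied sector by sector. Each $H_\ell$ has at most two negative eigenvalues $-\kappa^2$, counted with multiplicity. Since the modes grow as $e^{\kappa t}$ by the discussion following Eq.~\eqref{eq:rightdecay}, each sector carries at most two radial growth rates $\kappa>0$, independently of the azimuthal index $m$.
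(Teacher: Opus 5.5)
Your proposal is correct and follows the paper's route. The paper gives no separate proof: the corollary is read off from the Angular momentum ordering theorem at each parameter value, with $q_\ell\geq q_0$ giving the inclusion of the instability sets and the rank-two trace-map argument giving at most two negative eigenvalues per sector. Your extra step, that negativity persists along a monotone attractive path because each term $\lambda_j|u(x_j)|^2$ is nonincreasing, is a sound supplement that the paper leaves implicit.
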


The angular degeneracy of a radial eigenvalue in the $\ell$ sector is $2\ell+1$.
The theorem does not exclude a higher-$\ell$ instability after the $s$-wave sector has already become unstable.

\section{Exact onset on a two-shell Israel background}
\label{sec:exact-background}

\subsection{Static scalar resistance and the finite-rank onset condition}

The rank-two onset law is not restricted to a single Schwarzschild metric.
Consider an asymptotically flat static spherical black-hole exterior of the form
\begin{equation}
\dd s^2=-h(r)\dd t^2+\frac{\dd r^2}{f(r)}+r^2\dd\Omega^2,
\qquad r_{\rm h}<r<\infty,
\label{eq:general-static-metric}
\end{equation}
where $h$ is continuous, $h(r)>0$ and $f(r)>0$ outside a regular horizon, and the coefficients may be piecewise smooth across static shells.
We assume $h(r),f(r)\to1$ as $r\to\infty$ and that the usual horizon regularity condition excludes the nonconstant static solution there.
In addition, throughout this subsection the scalar perturbation is assumed to have no smooth bulk effective-mass term in the open regions between the shells; in the theory of Sec.~\ref{subsec:linear-scalar}, this means $\beta T_0=0$ away from the shell surfaces.
This assumption is satisfied by the piecewise-vacuum Schwarzschild construction below.
If a nonzero smooth bulk trace is present, the same finite-rank Birman--Schwinger strategy applies after replacing the elementary kernel below by the zero-energy Green function of the corresponding bulk radial operator, but the simple resistance formula need not hold.
Put
\begin{equation}
P(r)=r^2\sqrt{h(r)f(r)},
\qquad
W(r)=\frac{r^2}{\sqrt{h(r)f(r)}},
\qquad
{\cal S}(R)=\int_R^\infty\frac{\dd r}{P(r)}.
\label{eq:scalar-resistance}
\end{equation}
The function ${\cal S}$ is positive and decreasing.
The quantity ${\cal S}(R)$ relates the static scalar-field difference
between radius $R$ and infinity to the conserved radial scalar flux.
In analogy with a one-dimensional resistance, we call it the
static scalar resistance.

Let a shell at $R_j$ have invariant surface trace $\tau_j$.
Integrating the scalar equation in Gaussian normal distance across the shell gives
\begin{equation}
\left[P(r)\varphi'(r)\right]_{R_j-0}^{R_j+0}
=-s_j\varphi(R_j),
\qquad
s_j=4\pi\beta\tau_j R_j^2\sqrt{h(R_j)}.
\label{eq:general-flux-jump}
\end{equation}
The lapse $h(R_j)$ is single valued because the induced metric is continuous.
The attractive sign is $s_j>0$.
For a mode $e^{-i\omega t}Y_{\ell m}\varphi_\ell(r)$, the radial equation can be written as the generalized Sturm--Liouville problem
\begin{equation}
-\frac{\dd}{\dd r}\left(P(r)\frac{\dd\varphi_\ell}{\dd r}\right)
+\ell(\ell+1)\sqrt{\frac{h(r)}{f(r)}}\,\varphi_\ell
-\sum_j s_j\delta(r-R_j)\varphi_\ell
=\omega^2W(r)\varphi_\ell.
\label{eq:general-radial-equation}
\end{equation}
The weight $W$ is positive.
For a static $s$-wave, the vacuum equation is $(P\varphi')'=0$.
The horizon-regular solution is the constant function, while the solution decaying at infinity is ${\cal S}(r)$.
Consequently, the zero-frequency Green kernel is
\begin{equation}
{\cal G}_0(r,r')={\cal S}\bigl(\max\{r,r'\}\bigr).
\label{eq:general-static-green}
\end{equation}

\begin{theorem}[Shell onset condition]
\label{thm:universal-onset}
Assume the bulk scalar effective-mass term vanishes away from the shells, as stated above, and let $N$ attractive shells be placed at
$$
r_{\rm h}<R_1<\cdots<R_N.
$$
Define
\begin{equation}
[\mathsf M_N]_{ij}
={\cal S}\bigl(\max\{R_i,R_j\}\bigr),
\qquad
\mathsf S_N=\diag(s_1,\ldots,s_N).
\label{eq:general-threshold-matrix}
\end{equation}
A nonzero static scalar cloud, regular at the horizon and decaying at infinity, exists if and only if
\begin{equation}
\det\left(I_N-\mathsf M_N\mathsf S_N\right)=0.
\label{eq:general-threshold-determinant}
\end{equation}
Under a monotone common scaling of the attractive strengths from zero, the first scalar instability occurs in the $s$-wave sector when
\begin{equation}
\lambda_{\max}\left(
\mathsf S_N^{1/2}\mathsf M_N\mathsf S_N^{1/2}
\right)=1.
\label{eq:general-lambda-max}
\end{equation}
For two shells, set
\begin{equation}
{\cal S}_j={\cal S}(R_j),
\qquad
a_j=s_j{\cal S}_j,
\qquad
\chi=\frac{{\cal S}_2}{{\cal S}_1}.
\label{eq:general-normalized-parameters}
\end{equation}
Here $a_j=1$ is the one-shell scalar threshold obtained by retaining
only the $j$th scalar interaction on the same fixed background geometry.
Then $0<\chi<1$ and the threshold condition is
\begin{equation}
1-a_1-a_2+(1-\chi)a_1a_2=0.
\label{eq:general-two-shell-curve}
\end{equation}
If $0\leq a_1,a_2<1$, the sign of the left side distinguishes the stable side, the simple threshold, and the side with exactly one unstable $s$-wave radial mode, as in Theorem~\ref{thm:collective-threshold} below.
\end{theorem}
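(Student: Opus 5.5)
The plan has three parts: a direct static reduction to a finite system, a variational argument for the first instability, and elementary algebra in the two-shell case.

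\emph{Static reduction.} Set $\omega=0$ and $\ell=0$ in Eq.~\eqref{eq:general-radial-equation}. In each open interval between shells the equation reduces to $(P\varphi')'=0$, so $\varphi$ is piecewise of the form $A+B{\cal S}(r)$.
\begin{itemize}
\item Horizon regularity forces $\varphi$ to be constant on $(r_{\rm h},R_1)$.
\item Decay at infinity forces $\varphi\propto{\cal S}$ on $(R_N,\infty)$.
\end{itemize}
I would then write the candidate cloud as
$$
\varphi(r)=\sum_j s_jc_j\,{\cal G}_0(r,R_j),
\qquad c_j=\varphi(R_j).
$$
Since $P\,\partial_r{\cal S}=-1$, the function ${\cal G}_0(\cdot,R_j)$ is continuous and $P\,\partial_r{\cal G}_0$ jumps by $-1$ at $R_j$. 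This reproduces the flux jump \eqref{eq:general-flux-jump}, and the representation is consistent exactly when $c=\mathsf M_N\mathsf S_Nc$.

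For the converse, take any admissible solution. The difference between $\varphi$ and this representation solves the homogeneous static problem with no jumps. It is therefore constant near the horizon and a multiple of ${\cal S}$ at infinity, and since ${\cal S}\to0$ is nonconstant, it must vanish. The maps $\varphi\mapsto c$ and $c\mapsto\varphi$ are mutually inverse, as in the proof of Theorem~\ref{thm:boundary-determinant}. Hence clouds exist if and only if $\det(I_N-\mathsf M_N\mathsf S_N)=0$. For $\ell\ge1$ the Green function is different, but only the $s$-wave criterion is claimed.

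\emph{First instability.} For $s_j>0$, $\det(I-\mathsf M\mathsf S)=\det(I-\mathsf S^{1/2}\mathsf M\mathsf S^{1/2})$, and the matrix $\mathsf S^{1/2}\mathsf M\mathsf S^{1/2}$ is symmetric. Under the scaling $s\to t s$ its eigenvalues scale linearly in $t$. The first zero-energy crossing therefore occurs when $\lambda_{\max}$ reaches $1$.

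To identify this crossing with the onset of instability, I would use a Birman--Schwinger count for the self-adjoint problem with weight $W$: the number of negative eigenvalues equals the number of eigenvalues of $\mathsf S^{1/2}\mathsf M\mathsf S^{1/2}$ exceeding $1$. The justification is that the quadratic form is $\int P|\varphi'|^2-\sum s_j|\varphi(R_j)|^2$, and minimizing $\int P|\varphi'|^2$ with the values $\varphi(R_j)$ fixed gives exactly $c^*\mathsf M^{-1}c$ with $\mathsf M$ invertible. Negativity of the form is then equivalent to $\mathsf M^{-1}-\mathsf S$ having a negative direction, that is, to $\lambda_{\max}(\mathsf S^{1/2}\mathsf M\mathsf S^{1/2})>1$. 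The angular-ordering argument of the earlier theorem carries over because the $\ell(\ell+1)$ term is nonnegative, so the first instability is in the $s$-wave sector.

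The main obstacle is this step. It requires the energy-space minimizer with the correct boundary behaviour (constants are admissible at the horizon but must decay at infinity), positivity of $\mathsf M$, and the fact that a zero of the form's infimum corresponds to the threshold rather than an eigenvalue embedded at $0$. I would handle it by the explicit minimization: $\varphi$ is piecewise linear in ${\cal S}$, and the Dirichlet energy equals a sum of $(\Delta c)^2/\Delta{\cal S}$ plus $c_N^2/{\cal S}_N$, which is a positive-definite Jacobi form.

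\emph{Two shells.} The matrix is
$$
\mathsf M_2=\begin{pmatrix}{\cal S}_1&{\cal S}_2\\{\cal S}_2&{\cal S}_2\end{pmatrix}.
$$
Expanding,
$$
\det(I-\mathsf M_2\mathsf S_2)=(1-s_1{\cal S}_1)(1-s_2{\cal S}_2)-s_1s_2{\cal S}_2^2=1-a_1-a_2+a_1a_2-\chi a_1a_2,
$$
which gives Eq.~\eqref{eq:general-two-shell-curve}. The bound $0<\chi<1$ follows because ${\cal S}$ is positive and strictly decreasing, as $P>0$. The one-shell threshold $a_j=1$ is the $N=1$ case of the same computation.

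For the sign statement with $0\le a_1,a_2<1$, I would use the fact that the two eigenvalues of the symmetric matrix multiply to $\det(\mathsf S\mathsf M)=a_1a_2(1-\chi)$ and sum to $a_1+a_2$. The smaller eigenvalue is at most $\min a_j<1$. The count of eigenvalues above $1$ is therefore $0$ or $1$ according to the sign of the determinant, with a simple threshold when it vanishes.
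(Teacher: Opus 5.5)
Your proposal is correct. The static reduction and the two-shell algebra match the paper's, and you also supply the converse (every admissible cloud is reproduced from its shell values), which the paper leaves implicit. Your argument for the onset of instability, however, takes a genuinely different route.

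The paper works at negative energy. It introduces the kernel ${\cal G}_\kappa$ of $-\frac{d}{dr}P\frac{d}{dr}+\kappa^2W$ and shows by the resolvent identity that $\mathsf S_N^{1/2}\mathsf M_N(\kappa)\mathsf S_N^{1/2}$ decreases strictly in $\kappa$. It then uses $\mathsf M_N(0)=\mathsf M_N$ and $\mathsf M_N(\kappa)\to0$ to count crossings of $1$, each crossing being a growing mode.

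You stay at zero energy and use a Dirichlet principle. For fixed shell values, Cauchy--Schwarz on each interval gives $\int P|\varphi'|^2\geq c^*\mathsf M_N^{-1}c$. The right side is the Jacobi form $\sum|\Delta c|^2/|\Delta{\cal S}|+|c_N|^2/{\cal S}_N$, which is the inverse of the ``Brownian covariance'' matrix $\min\{{\cal S}_i,{\cal S}_j\}$. Equality is approached by cut-off piecewise harmonic interpolants. Min--max then equates the number of negative eigenvalues with the negative index of $\mathsf M_N^{-1}-\mathsf S_N$, i.e.\ with the number of eigenvalues of $\mathsf S_N^{1/2}\mathsf M_N\mathsf S_N^{1/2}$ above $1$.

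Your route never needs the $\kappa$-dependent kernel. In particular it avoids the continuity of $\mathsf M_N(\kappa)$ as $\kappa\downarrow0$ at the edge of the continuous spectrum. The paper asserts this for the general background but proves it only for Schwarzschild (Lemma~\ref{lem:threshold-kernel}). Your route also makes the positivity of $\mathsf M_N$ explicit. What it does not give is the growth-rate equation $\det(I_N-\mathsf M_N(\kappa)\mathsf S_N)=0$ or the continuous emergence of $\kappa$ from zero at threshold. The paper's route supplies both, and Sec.~\ref{subsec:numerical-growth-rate} uses them.

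You should state explicitly why constants are admissible at the horizon. For a regular nondegenerate horizon, $\int_{r_{\rm h}}dr/P$ diverges logarithmically, so the constant can be cut off at vanishing energy cost, as with the cutoff $\eta_L$ in the proof of Theorem~\ref{thm:collective-threshold}. By contrast, $\int^\infty dr/P<\infty$ is what produces the unavoidable cost $|c_N|^2/{\cal S}_N$ at infinity.
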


\begin{proof}
A static solution generated by its shell values $c_j=\varphi(R_j)$ is
$$
\varphi(r)=\sum_{j=1}^N s_jc_j
{\cal G}_0(r,R_j).
$$
Evaluation at the shell radii gives
$(I_N-\mathsf M_N\mathsf S_N)c=0$, proving
Eq.~\eqref{eq:general-threshold-determinant}.
The time-dependent radial equation is a generalized Sturm--Liouville problem with positive weight and quadratic form
$$
Q_0[\varphi]
=\int_{r_{\rm h}}^\infty P(r)|\varphi'(r)|^2\dd r
-\sum_{j=1}^Ns_j|\varphi(R_j)|^2.
$$
The shell-free form is nonnegative.
For $\omega^2=-\kappa^2$, let ${\cal G}_\kappa$ be the positive Green kernel of
$$
-\frac{\dd}{\dd r}P(r)\frac{\dd}{\dd r}+\kappa^2W(r)
$$
with horizon regularity and decay at infinity, and let
$[\mathsf M_N(\kappa)]_{ij}={\cal G}_\kappa(R_i,R_j)$.
The Birman--Schwinger matrix is
$\mathsf S_N^{1/2}\mathsf M_N(\kappa)\mathsf S_N^{1/2}$.
The resolvent identity gives, for every nonzero $z\in\C^N$ and $\kappa>0$,
$$
\frac{\dd}{\dd\kappa}
\sum_{i,j=1}^N\overline{z_i}
[\mathsf S_N^{1/2}\mathsf M_N(\kappa)\mathsf S_N^{1/2}]_{ij}z_j
=-2\kappa\int_{r_{\rm h}}^\infty
W(r)\left|\sum_{j=1}^N\sqrt{s_j}z_j
{\cal G}_\kappa(r,R_j)\right|^2\dd r<0.
$$
Moreover, $\mathsf M_N(0)=\mathsf M_N$ and
$\mathsf M_N(\kappa)\to0$ as $\kappa\to\infty$.
The Birman--Schwinger principle therefore gives Eq.~\eqref{eq:general-lambda-max} and shows that crossing it creates a growing mode.
For $\ell\geq1$, the quadratic form contains in addition the strictly nonnegative angular term
$$
\ell(\ell+1)
\int_{r_{\rm h}}^\infty
\sqrt{\frac{h(r)}{f(r)}}\,|\varphi(r)|^2\dd r,
$$
so the first loss of nonnegativity is an $s$-wave.
For $N=2$,
$$
\mathsf M_2=
\left(
\begin{array}{cc}
{\cal S}_1&{\cal S}_2\\
{\cal S}_2&{\cal S}_2
\end{array}
\right),
$$
and direct evaluation of the determinant gives Eq.~\eqref{eq:general-two-shell-curve}.
The classification for $a_1,a_2<1$ follows from Sylvester's criterion and the monotonic Birman--Schwinger eigenvalues.
\end{proof}

At threshold, a corresponding cloud is given in the exact quadrature form
\begin{equation}
\varphi_{\rm c}(r)
=\sum_{j=1}^Ns_jc_j
{\cal S}\bigl(\max\{r,R_j\}\bigr),
\label{eq:general-critical-cloud}
\end{equation}
where $c$ is a positive Perron--Frobenius eigenvector at the first onset.
Thus the first cloud is node free.

\subsection{Three Schwarzschild regions joined by two Israel shells}

We now impose the metric junctions exactly on the scalar-free branch.
Let
$$
M_0<M_1<M_2,
\qquad
R_1>2M_1,
\qquad
R_2>2M_2,
$$
and define
$$
f_k(r)=1-\frac{2M_k}{r},
\qquad k=0,1,2.
$$
With a time coordinate normalized at infinity, the metric is
\begin{equation}
h(r)=
\left\{
\begin{array}{ll}
C_0f_0(r),&2M_0<r<R_1,\\
C_1f_1(r),&R_1<r<R_2,\\
f_2(r),&r>R_2,
\end{array}
\right.
\qquad
f(r)=f_k(r)\quad\hbox{in region }k,
\label{eq:three-region-metric}
\end{equation}
where continuity of the induced metric gives
\begin{equation}
C_1=\frac{f_2(R_2)}{f_1(R_2)},
\qquad
C_0=C_1\frac{f_1(R_1)}{f_0(R_1)}.
\label{eq:lapse-normalizations}
\end{equation}
The horizon is at $r=2M_0$.
The Israel equations at shell $j$ give exactly
\begin{eqnarray}
\sigma_j
&=&\frac{\sqrt{f_{j-1}(R_j)}-\sqrt{f_j(R_j)}}{4\pi R_j},
\label{eq:exact-two-shell-sigma}\\
p_j
&=&\frac{1}{8\pi R_j}
\left\{
\frac{1-M_j/R_j}{\sqrt{f_j(R_j)}}
-\frac{1-M_{j-1}/R_j}{\sqrt{f_{j-1}(R_j)}}
\right\},
\label{eq:exact-two-shell-pressure}\\
\tau_j&=&-\sigma_j+2p_j.
\label{eq:exact-two-shell-trace}
\end{eqnarray}
Using $1-M_k/R=(1+f_k)/2$, the exact trace can also be written in the factorized form
\begin{equation}
\tau_j=
\frac{\sqrt{f_j(R_j)}-\sqrt{f_{j-1}(R_j)}}{8\pi R_j}
\left\{3-\frac{1}{\sqrt{f_{j-1}(R_j)f_j(R_j)}}\right\}.
\label{eq:exact-trace-factorization}
\end{equation}
For a nontrivial positive-mass shell, $M_j>M_{j-1}$, the prefactor in Eq.~\eqref{eq:exact-trace-factorization} is nonzero, and hence
\begin{equation}
\tau_j=0
\quad\Longleftrightarrow\quad
\sqrt{f_{j-1}(R_j)f_j(R_j)}=\frac{1}{3}.
\label{eq:exact-trace-zero}
\end{equation}
Equation~\eqref{eq:exact-trace-zero} is the finite-shell-mass continuation of the $R=3M$ transition obtained in the light-shell limit below.
For unequal adjacent masses, the exact transition radius is not, in general, the photon-sphere radius of either neighboring Schwarzschild region.
The lapse constants do not appear in Eqs.~\eqref{eq:exact-two-shell-sigma} and
\eqref{eq:exact-two-shell-pressure}, because they cancel in the ratio of the temporal metric derivative to the temporal metric itself.
They do enter the scalar strengths through Eq.~\eqref{eq:general-flux-jump}.

For this geometry, the two scalar resistances are elementary:
\begin{eqnarray}
{\cal S}_2
&=&-\frac{\log f_2(R_2)}{2M_2},
\label{eq:exact-resistance-2}\\
{\cal S}_1
&=&{\cal S}_2
+\frac{1}{2M_1\sqrt{C_1}}
\log\left\{\frac{f_1(R_2)}{f_1(R_1)}\right\}.
\label{eq:exact-resistance-1}
\end{eqnarray}
The exact attractive strengths are
\begin{equation}
s_j=4\pi\beta\tau_jR_j^2\sqrt{h(R_j)}.
\label{eq:exact-shell-strength}
\end{equation}
Equations~\eqref{eq:general-two-shell-curve} and
\eqref{eq:general-normalized-parameters}, with
Eqs.~\eqref{eq:exact-resistance-2}--\eqref{eq:exact-shell-strength}, are therefore the exact linear scalar onset condition on a scalar-free spacetime with two Israel shells.
The background metric and surface stresses satisfy the distributional Einstein equation exactly.
No probe approximation is used in this subsection.
The remaining idealization is that the surface equation of state and radial mechanical stability are not specified, and the nonlinear scalarized branch is not constructed.

\subsection{Recovery of the Schwarzschild probe limit}

Let $M_1-M_0$ and $M_2-M_1$ be of order $\epsilon M_0$.
Then
$$
C_0=1+O(\epsilon),
\qquad
C_1=1+O(\epsilon),
$$
and, for fixed radii,
\begin{equation}
{\cal S}(R_j)
=-\frac{1}{2M_0}\log\left(1-\frac{2M_0}{R_j}\right)
+O\left(\frac{\epsilon}{M_0}\right).
\label{eq:resistance-probe-limit}
\end{equation}
The exact traces and lapse factors likewise reduce to the leading expressions in Sec.~\ref{subsec:physical-parameters}.
Thus the entries of the exact threshold matrix differ from their probe values by $O(\epsilon)$.
For a simple first crossing, standard perturbation of the largest matrix eigenvalue gives an $O(\epsilon)$ shift of the critical coupling.

Table~\ref{tab:exact-probe-comparison} gives a direct check.
We take $M_0=1$, $R_1=4$, $R_2=8$, and equal gravitational mass increments
$M_1-M_0=M_2-M_1=\delta M$.
For the probe calculation, put
$$
b=\frac{|\beta|\delta M}{M_0},
\qquad
d_j=M_0\left\{1-\frac{M_0}{R_jf_0(R_j)}\right\}
{\cal S}_0(R_j),
$$
where ${\cal S}_0$ is the Schwarzschild resistance with mass $M_0$.
Then $a_j=bd_j$ and the smaller positive root of
$$
1-b(d_1+d_2)+(1-\chi_0)b^2d_1d_2=0
$$
gives $b_{\rm c}^{\rm probe}=4.11199$.
The exact column uses Eqs.~\eqref{eq:exact-two-shell-sigma}--\eqref{eq:exact-shell-strength} in the same two-shell determinant and solves for the common coupling $|\beta|$.

\begin{table}[H]
\centering
\caption{Comparison between the exact two-Israel-shell onset and the Schwarzschild probe approximation for $M_0=1$, $R_1=4$, $R_2=8$, and equal mass increments $\delta M$.
The displayed coupling is $|\beta_{\rm c}|\delta M/M_0$.
The relative shift is $(|\beta_{\rm c}^{\rm exact}|-|\beta_{\rm c}^{\rm probe}|)/|\beta_{\rm c}^{\rm probe}|$.}
\label{tab:exact-probe-comparison}
\begin{tabular}{cccc}
\toprule
$2\delta M/M_0$ & probe & exact & relative shift\\
\midrule
$0.005$ & $4.11199$ & $4.11849$ & $0.158\%$\\
$0.010$ & $4.11199$ & $4.12505$ & $0.318\%$\\
$0.020$ & $4.11199$ & $4.13834$ & $0.641\%$\\
$0.040$ & $4.11199$ & $4.16559$ & $1.304\%$\\
\bottomrule
\end{tabular}
\end{table}

The discrepancy is linear in the shell mass ratio, as expected from the probe expansion.

\section{Closed-form Schwarzschild onset and critical scalar cloud}

\subsection{Threshold Green function}

At $\omega=0$ and $\ell=0$, the shell-free radial equation has the two solutions
\begin{equation}
y_{\rm L}(r)=r,
\qquad
y_{\rm R}(r)=-\frac{r}{2M}\log f(r).
\label{eq:zeromodes}
\end{equation}
The scalar field $y_{\rm L}/r=1$ is regular at the future horizon.
The function $y_{\rm R}$ tends to $1$ at infinity, so $y_{\rm R}/r$ has the required $1/r$ falloff.
Their Wronskian with respect to $x$ is
\begin{equation}
W_x[y_{\rm L},y_{\rm R}]=-1.
\label{eq:zerowronskian}
\end{equation}

\begin{lemma}[Threshold kernel]
\label{lem:threshold-kernel}
For fixed $x$ and $y$, the negative-energy Green kernel has the limit
\begin{equation}
G_0^{\BH}(x,y;0)
:=\lim_{\kappa\downarrow0}G_0(x,y;-\kappa^2)
=y_{\rm L}(r_{<})y_{\rm R}(r_{>}).
\label{eq:zerogreen}
\end{equation}
Here $r_{<}=\min\{r(x),r(y)\}$ and $r_{>}=\max\{r(x),r(y)\}$.
The convergence is locally uniform in $x$ and $y$.
\end{lemma}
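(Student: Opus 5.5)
The plan is to use the Wronskian representation \eqref{eq:greennegative} and show that $p_{0,\kappa}$ and $q_{0,\kappa}$ converge to multiples of $y_{\rm L}$ and $y_{\rm R}$ whose product over the Wronskian gives exactly $y_{\rm L}(r_<)y_{\rm R}(r_>)$. The normalizations at $\kappa>0$ degenerate as $\kappa\downarrow0$: $p_{0,\kappa}\sim e^{\kappa x}$ flattens to a constant at $x\to-\infty$, while $y_{\rm L}=r\to2M$ there. So I would renormalize, setting $\tilde p_\kappa=2M\,p_{0,\kappa}$ and $\tilde q_\kappa=q_{0,\kappa}$. Since $G_0$ is invariant under rescaling either solution, the Green kernel equals $-\tilde p_\kappa(x_<)\tilde q_\kappa(x_>)/W[\tilde p_\kappa,\tilde q_\kappa]$.

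\emph{Left solution.} Near $x\to-\infty$ the potential $V_0^{\BH}=f\,2M/r^3$ decays like $e^{x/2M}$, which is integrable. I would write the Volterra equation
$$\tilde p_\kappa(x)=2Me^{\kappa x}+\int_{-\infty}^x\frac{\sinh\kappa(x-y)}{\kappa}V_0^{\BH}(y)\tilde p_\kappa(y)\dd y .$$
Its kernel is bounded by $(x-y)V_0^{\BH}(y)$ uniformly in $\kappa\in[0,1]$, and $\int |y|V_0^{\BH}(y)\dd y<\infty$ on $(-\infty,x_0]$. The standard Volterra iteration then gives uniform convergence on $(-\infty,x_0]$ to the $\kappa=0$ solution with $\tilde p_0\to2M$ and $\tilde p_0'\to0$, namely $y_{\rm L}=r$. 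Derivatives converge as well. Continuous dependence of the ODE on the parameter $\kappa$ extends this to compact sets.

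\emph{Right solution.} At $x\to+\infty$ the potential decays only like $2M/x^3$, so $\int^\infty yV\,\dd y<\infty$ still holds. The same Jost-type argument with kernel $\sinh\kappa(y-x)/\kappa$, bounded by $(y-x)$, gives $q_{0,\kappa}\to q_0$ locally uniformly with $C^1$ convergence. Here $q_0$ is the zero-energy solution tending to $1$ with vanishing derivative at $+\infty$, i.e. $y_{\rm R}$, since $y_{\rm R}\to1$ and $y_{\rm R}'=O(1/x^2)$.

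\emph{Conclusion.} The Wronskian, evaluated at any fixed point, converges to $W_x[y_{\rm L},y_{\rm R}]=-1$ by \eqref{eq:zerowronskian}. Hence $G_0(x,y;-\kappa^2)\to y_{\rm L}(r_<)y_{\rm R}(r_>)$ locally uniformly, because products and quotients of locally uniformly convergent functions with a nonvanishing limit denominator converge locally uniformly.

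The main obstacle is the right-end Volterra estimate: I must confirm that the $1/x^3$ decay gives a first moment of $V$ that is uniformly finite in $\kappa$. This is exactly the condition for zero energy not to be a resonance issue here. The bound $|\sinh\kappa t/\kappa|\,e^{-\kappa t}\le t$ handles it, after writing $q_{0,\kappa}=e^{-\kappa x}m_\kappa(x)$ and iterating for $m_\kappa$.
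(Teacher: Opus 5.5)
Your proposal is correct and follows essentially the same route as the paper: Volterra representations of the left and right solutions, the first-moment condition $\int(1+|x|)V_0^{\BH}\,dx<\infty$, identification of the $\kappa\downarrow0$ limits with $y_{\rm L}/(2M)$ and $y_{\rm R}$ (your factor $2M$ is only a rescaling), and the Wronskian $W_x[y_{\rm L},y_{\rm R}]=-1$. One small correction is that $\sinh(\kappa t)/\kappa\geq t$, not $\leq t$. On the left you therefore need the same factorization $\tilde p_\kappa=e^{\kappa x}m_\kappa$, with kernel $(1-e^{-2\kappa t})/(2\kappa)\leq t$, that you describe for the right end; uniform convergence on $(-\infty,x_0]$ then holds for $m_\kappa$ rather than for $\tilde p_\kappa$ itself, which is harmless because only local uniform convergence is claimed.
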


\begin{proof}
The potential $V_0^{\BH}$ decays exponentially as $x\to-\infty$ and as $O(x^{-3})$ as $x\to+\infty$.
In particular,
$$
\int_{\R}(1+|x|)V_0^{\BH}(x)\dd x<\infty.
$$
The normalized left and right solutions satisfy the Volterra equations
\begin{eqnarray}
p_{0,\kappa}(x)
&=&e^{\kappa x}
+\int_{-\infty}^{x}
\frac{\sinh\{\kappa(x-s)\}}{\kappa}
V_0^{\BH}(s)p_{0,\kappa}(s)\dd s,
\nonumber\\
q_{0,\kappa}(x)
&=&e^{-\kappa x}
+\int_x^{\infty}
\frac{\sinh\{\kappa(s-x)\}}{\kappa}
V_0^{\BH}(s)q_{0,\kappa}(s)\dd s.
\label{eq:volterra-zero}
\end{eqnarray}
The integrability above and dominated convergence show that the solutions and their first derivatives converge locally uniformly as $\kappa\downarrow0$.
The limiting left solution is normalized to tend to $1$ at the horizon and is therefore $y_{\rm L}/(2M)$.
The limiting right solution is normalized to tend to $1$ at infinity and is therefore $y_{\rm R}$.
Their Wronskian is $-1/(2M)$ by Eq.~\eqref{eq:zerowronskian}.
Substitution into Eq.~\eqref{eq:greennegative} gives Eq.~\eqref{eq:zerogreen}.
\end{proof}

The kernel in Eq.~\eqref{eq:zerogreen} imposes regularity at the horizon and a $1/r$ falloff for the scalar field at infinity.
It is not the kernel of a bounded inverse on $L^2(\R)$ because zero is the lower edge of the continuous spectrum.
In what follows, a threshold solution means a nonzero zero-frequency solution that is regular at the future horizon and whose scalar field has the decaying multipole behavior $r^{-\ell-1}$ at infinity.
Set
\begin{equation}
L_j=-\log f(R_j)>0,
\qquad j=1,2.
\label{eq:Lj}
\end{equation}
Since $R_1<R_2$, Eq.~\eqref{eq:zerogreen} gives
\begin{eqnarray}
G_{11}^{(0)}&=&\frac{R_1^2L_1}{2M},
\label{eq:G11zero}\\
G_{22}^{(0)}&=&\frac{R_2^2L_2}{2M},
\label{eq:G22zero}\\
G_{12}^{(0)}&=&G_{21}^{(0)}
=\frac{R_1R_2L_2}{2M}.
\label{eq:G12zero}
\end{eqnarray}
Let $\mathsf M_0(0)$ be the $2\times2$ matrix with these entries.
The geometric overlap parameter is
\begin{equation}
\chi
=\frac{\left(G_{12}^{(0)}\right)^2}{G_{11}^{(0)}G_{22}^{(0)}}
=\frac{L_2}{L_1}
=\frac{\log f(R_2)}{\log f(R_1)}.
\label{eq:chi}
\end{equation}
Because $0<f(R_1)<f(R_2)<1$, one has $0<\chi<1$.
The limit $\chi\to1$ corresponds to neighboring shells, whereas $\chi\to0$ corresponds to weak static overlap.
For fixed $R_2$, the near-horizon limit $R_1\downarrow2M$ gives $L_1\to\infty$ and therefore $\chi\to0$.
Thus an inner layer moved toward the horizon becomes weakly coupled, in normalized static-response units, to a fixed outer layer.
This suppression is a specifically black-hole effect associated with the horizon boundary condition and the large tortoise separation.

We next examine the flat-space limit.
For $M/R_j\to0$,
$$
L_j=\frac{2M}{R_j}+O\left(\frac{M^2}{R_j^2}\right),
$$
and hence
\begin{equation}
G_{11}^{(0)}\longrightarrow R_1,
\qquad
G_{22}^{(0)}\longrightarrow R_2,
\qquad
G_{12}^{(0)}\longrightarrow R_1
\label{eq:flatgreen}
\end{equation}
and
\begin{equation}
\chi\longrightarrow\frac{R_1}{R_2}.
\label{eq:flatchi}
\end{equation}
These are the zero-energy radial Green matrix entries for two concentric delta shells in flat three-dimensional space.
The critical attraction of one shell is
\begin{equation}
g_{j,{\rm c}}=\frac{2M}{R_j^2L_j},
\label{eq:singlecritical}
\end{equation}
which tends to $1/R_j$ in the flat limit.
Thus the Schwarzschild threshold reduces continuously to the single-shell condition $\lambda_j=-1/R_j$ used in the flat delta-shell problem \cite{Kaminaga2026}.

Assume from now on that both shell interactions are attractive and write
\begin{equation}
g_j=-\lambda_j>0.
\label{eq:gj}
\end{equation}
The normalized single-shell strengths are
\begin{equation}
a_j=g_jG_{jj}^{(0)}.
\label{eq:aj}
\end{equation}
A shell at $R_j$ reaches its isolated static threshold at $a_j=1$.
The threshold Birman--Schwinger matrix is
\begin{equation}
B_0=
\left(
\begin{array}{cc}
a_1&\sqrt{a_1a_2\chi}\\
\sqrt{a_1a_2\chi}&a_2
\end{array}
\right).
\label{eq:Bzero}
\end{equation}
Its off-diagonal entry measures the static scalar exchange between the shells.

\begin{theorem}[Collective scalarization curve in Schwarzschild]
\label{thm:collective-threshold}
Assume $0\leq a_1<1$ and $0\leq a_2<1$, so each shell is below its one-shell threshold.
Define
\begin{equation}
\Delta_0
=1-a_1-a_2+(1-\chi)a_1a_2.
\label{eq:Delta0}
\end{equation}
If $\Delta_0>0$, the complete scalar perturbation problem has no growing mode in any angular sector.
If $\Delta_0=0$, the scalar perturbation problem has no growing mode and has a unique $s$-wave threshold solution up to normalization.
If $\Delta_0<0$, the $s$-wave sector has exactly one growing radial mode.
The critical curve is
\begin{equation}
a_2=\frac{1-a_1}{1-(1-\chi)a_1}.
\label{eq:criticalcurve}
\end{equation}
\end{theorem}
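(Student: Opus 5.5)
The argument is a Birman--Schwinger count. I would work with the symmetric matrix $B(\kappa)=\Lambda_-^{1/2}\mathsf M_0(\kappa)\Lambda_-^{1/2}$ for $\kappa\ge0$, where $\Lambda_-=\diag(g_1,g_2)$. By the rank-two reduction of Theorem~\ref{thm:boundary-determinant}, $K_0(\kappa)=I_2-\mathsf M_0(\kappa)\Lambda_-$ is singular exactly when $B(\kappa)$ has eigenvalue $1$, with the same multiplicity. The proof has three steps, in this order.

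\textbf{Step 1: monotonicity of $B(\kappa)$.} I will show that $B(\kappa)$ is continuous on $[0,\infty)$, strictly decreasing in the sense of quadratic forms, and tends to $0$ as $\kappa\to\infty$. At $\kappa=0$ I set $B(0)=B_0$, using Lemma~\ref{lem:threshold-kernel}. Strict decrease comes from the resolvent identity
\[
\frac{\dd}{\dd\kappa}G_0(x_i,x_j;-\kappa^2)=-2\kappa\int G_0(x_i,y;-\kappa^2)G_0(y,x_j;-\kappa^2)\dd y .
\]
This is the same computation as in the proof of Theorem~\ref{thm:universal-onset}, and linear independence of $G_0(\cdot,x_1)$ and $G_0(\cdot,x_2)$ makes the derivative strictly negative. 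Decay to $0$ comes from $G_0(x,x;-\kappa^2)\le 1/(2\kappa)$, since $V_0^{\BH}\ge0$.

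\textbf{Step 2: counting $s$-wave modes.} Each eigenvalue $\mu_k(\kappa)$ of $B(\kappa)$ is then continuous and strictly decreasing in $\kappa$, and tends to $0$. The number of negative eigenvalues of $H_0$ therefore equals the number of eigenvalues of $B_0$ strictly greater than $1$. An eigenvalue of $B_0$ that equals exactly $1$ gives a threshold solution rather than a growing mode. To read these eigenvalues off $\Delta_0$, I compute
\[
\det(I-B_0)=(1-a_1)(1-a_2)-\chi a_1a_2=\Delta_0,\qquad \operatorname{tr}(I-B_0)=2-a_1-a_2>0 .
\]
The trace is positive because $a_1,a_2<1$, so at least one eigenvalue of $I-B_0$ is positive, i.e. at most one eigenvalue of $B_0$ reaches or exceeds $1$. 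This gives the three cases:
\begin{itemize}
\item if $\Delta_0>0$, both eigenvalues of $I-B_0$ are positive, so there is no $s$-wave mode;
\item if $\Delta_0=0$, exactly one eigenvalue of $B_0$ equals $1$ and is simple, so there is no $s$-wave growing mode and the threshold kernel is one-dimensional;
\item if $\Delta_0<0$, exactly one eigenvalue of $B_0$ exceeds $1$, so there is exactly one growing $s$-wave mode, which is simple by Eq.~\eqref{eq:kernelmultiplicity}.
\end{itemize}

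\textbf{Step 3: threshold solution, higher $\ell$, and the critical curve.} The threshold solution is built explicitly as $u=\sum_j g_jc_jG_0^{\BH}(x,x_j;0)$, with $c$ in the kernel of $I-\mathsf M_0(0)\Lambda_-$. By Lemma~\ref{lem:threshold-kernel} and Eq.~\eqref{eq:zeromodes} this $u$ is horizon-regular and gives $\varphi\sim 1/r$. Conversely, any threshold solution with $\ell=0$ is a combination of $y_{\rm L}$ and $y_{\rm R}$ on each interval, and so has this form; hence it is unique up to normalization. For $\ell\ge1$, I use the Angular momentum ordering theorem: a growing mode with $\ell\ge1$ would force a growing $s$-wave mode, which is excluded when $\Delta_0\ge0$. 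This gives the ``no growing mode in any angular sector'' claims. Finally, solving $\Delta_0=0$ for $a_2$ gives Eq.~\eqref{eq:criticalcurve}. The denominator is nonzero because $1-(1-\chi)a_1>\chi a_1\ge0$.

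\textbf{Main obstacle.} The delicate point is the $\kappa\downarrow0$ endpoint. Zero is the edge of the essential spectrum, so I must make sure a crossing at $\mu=1$ cannot be lost or gained there. This needs continuity of $B(\kappa)$ up to $\kappa=0$, which Lemma~\ref{lem:threshold-kernel} supplies, and finiteness of $B_0$. A related issue is whether $\ell=0$ threshold solutions could be excluded because they are not $L^2$; this is why a threshold solution is defined via decay behavior rather than as an eigenvector.
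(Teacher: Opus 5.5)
Your Birman--Schwinger argument is the paper's own. It uses the symmetrized matrix $B(\kappa)$, strict decrease in the matrix order from the resolvent identity, continuity at $\kappa=0$ from Lemma~\ref{lem:threshold-kernel}, counting of crossings of $1$, and $\det(I_2-B_0)=\Delta_0$. Using $\mathrm{tr}(I_2-B_0)>0$ in place of the first leading minor in Sylvester's criterion is an equivalent $2\times2$ device. Your bound $G_0(x,x;-\kappa^2)\le 1/(2\kappa)$ justifies $B(\kappa)\to0$, which the paper only asserts. Your shell-value argument for $s$-wave uniqueness is exactly the paper's identification of $\ker(I_2-B_0)$ with the $s$-wave threshold solutions.

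What you omit is the paper's exclusion of threshold solutions with $\ell\ge1$ when $\Delta_0=0$. The paper's proof treats this as part of the claim, namely that the $s$-wave cloud is the only threshold solution of the full problem. Angular momentum ordering cannot supply it. Ordering is a min--max statement about $L^2$ eigenvalues, whereas a static threshold solution tends to a nonzero constant as $x\to-\infty$ and so lies outside the form domain $H^1(\R)$.

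The paper closes this as follows. Take a putative $\ell\ge1$ threshold solution $u$, multiply the static equation by $\bar u$, and integrate piecewise with the jump conditions. Horizon regularity and the $r^{-\ell-1}$ falloff of the scalar field kill the boundary terms, giving the improper identity $q_\ell[u]=0$. Next take a cutoff $\eta_L$ equal to $0$ on $(-\infty,-2L)$ and to $1$ on $(-L,\infty)$, with $|\eta_L'|\le C/L$. Then $\eta_Lu\in H^1(\R)$, and $q_0[\eta_Lu]\ge0$ because $H_0$ has no negative spectrum at $\Delta_0=0$. Since $u$ is bounded near the horizon, $\int|\eta_L'u|^2\dd x=O(L^{-1})$, so $q_0[u]\ge0$ in the limit. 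But $q_\ell[u]=q_0[u]+\int f\,\ell(\ell+1)r^{-2}|u|^2\dd x>0$, which contradicts $q_\ell[u]=0$.

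If you read the $\Delta_0=0$ clause only as uniqueness within the $s$-wave sector, your proof is complete. Under the paper's stronger reading, you need to add this step.
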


\begin{proof}
For $\kappa>0$, define the symmetric Birman--Schwinger matrix
$$
[B(\kappa)]_{ij}
=\sqrt{g_i}\,G_0(x_i,x_j;-\kappa^2)\sqrt{g_j}.
$$
The boundary equation $K_0(\kappa)c=0$ is equivalent, after multiplication by the positive diagonal matrix $\Gamma^{1/2}=\diag(\sqrt{g_1},\sqrt{g_2})$, to
$$
\left(I_2-B(\kappa)\right)\Gamma^{1/2}c=0.
$$
Thus $-\kappa^2$ is an $s$-wave eigenvalue, with the same multiplicity, if and only if $1$ is an eigenvalue of $B(\kappa)$.

The matrix $B(\kappa)$ is continuous for $\kappa\geq0$, with $B(0)=B_0$, and tends to zero as $\kappa\to\infty$.
For every nonzero $z\in\C^2$ and $\kappa>0$, the resolvent identity gives
\begin{eqnarray}
\frac{\dd}{\dd\kappa}
\sum_{i,j=1}^2\overline{z_i}[B(\kappa)]_{ij}z_j
&=&-2\kappa\int_{\R}
\left|
\sum_{j=1}^2\sqrt{g_j}z_j
G_0(y,x_j;-\kappa^2)
\right|^2\dd y
\nonumber\\
&<&0.
\label{eq:Bmonotone}
\end{eqnarray}
The inequality is strict because a nonzero linear combination of delta sources cannot be annihilated by the resolvent.
Hence $B(\kappa)$ is strictly decreasing in the matrix order.
Let $b_1(\kappa)\geq b_2(\kappa)$ be its ordered eigenvalues.
They are continuous, strictly decreasing, and tend to zero as $\kappa\to\infty$.
By the boundary determinant theorem, each solution of $b_j(\kappa)=1$ gives the negative eigenvalue $-\kappa^2$ with the same multiplicity.
The threshold Green kernel in Lemma~\ref{lem:threshold-kernel} and the same shell-value argument identify $\ker(I_2-B_0)$ with the space of $s$-wave threshold solutions.
Therefore, the number of negative $s$-wave eigenvalues equals the number of eigenvalues of $B_0$ larger than $1$.
An eigenvalue of $B_0$ equal to $1$ gives a threshold solution of the same multiplicity.

Because $a_1<1$, the first leading principal minor of $I_2-B_0$ is positive.
Its determinant is
$$
\det(I_2-B_0)=\Delta_0.
$$
If $\Delta_0>0$, Sylvester's criterion gives $I_2-B_0>0$, so the $s$-wave sector has no negative eigenvalue.
Angular momentum ordering then excludes a negative eigenvalue in every higher sector.

If $\Delta_0=0$, the largest eigenvalue of $B_0$ is $1$ and is simple, while the other eigenvalue is smaller than $1$.
This gives one threshold solution and no negative $s$-wave eigenvalue.
A negative mode for $\ell\geq1$ is excluded by angular momentum ordering.
Suppose instead that a nonzero higher-$\ell$ threshold solution exists.
Multiply its zero-energy equation by its complex conjugate, integrate piecewise across the shells, and use the jump conditions.
After the endpoints are moved to the two scattering ends, the boundary terms vanish because the solution is regular at the horizon and has the decaying static behavior at infinity.
The resulting improper quadratic-form identity is $q_\ell[u]=0$.
To justify the comparison with the $s$-wave form, choose $\eta_L\in C^\infty(\R)$ such that $\eta_L=0$ on $(-\infty,-2L)$, $\eta_L=1$ on $(-L,\infty)$, and $|\eta_L'|\leq C/L$.
The threshold solution is bounded at the horizon, and hence
$$
\int_{-2L}^{-L}|\eta_L'(x)u(x)|^2\dd x=O(L^{-1}).
$$
The already established absence of negative $s$-wave spectrum gives $q_0[\eta_Lu]\geq0$.
Letting $L\to\infty$ yields the improper inequality $q_0[u]\geq0$.
However,
$$
q_\ell[u]=q_0[u]+\int_{\R}f(r)\frac{\ell(\ell+1)}{r^2}|u(x)|^2\dd x,
$$
and the last integral is strictly positive for every nonzero solution.
This is a contradiction.

If $\Delta_0<0$, the two eigenvalues of $I_2-B_0$ have opposite signs, so exactly one eigenvalue of $B_0$ is larger than $1$.
The other is smaller than $1$ because the trace of $B_0$ is $a_1+a_2<2$.
Strict monotonicity and the limit $B(\kappa)\to0$ therefore give exactly one crossing through $1$, and hence exactly one negative $s$-wave eigenvalue.
Solving $\Delta_0=0$ for $a_2$ gives Eq.~\eqref{eq:criticalcurve}; its denominator is positive because $a_1<1$ and $0<\chi<1$.
\end{proof}

The statement for $\Delta_0<0$ counts the $s$-wave modes.
It does not exclude a higher-angular-momentum instability farther inside the unstable region.
The theorem gives a strict collective effect because an open part of the square $0<a_1,a_2<1$ has $\Delta_0<0$.

For equal normalized strengths $a_1=a_2=a$, Eq.~\eqref{eq:Delta0} has two algebraic roots.
Only the smaller root lies below the one-shell threshold, and it is
\begin{equation}
a_{\rm c}=\frac{1}{1+\sqrt{\chi}}.
\label{eq:equalcritical}
\end{equation}
Consequently,
\begin{equation}
\frac{1}{2}<a_{\rm c}<1.
\label{eq:acbounds}
\end{equation}
Two nearly coincident shells need approximately one half of the isolated critical strength each.
For weak static overlap, each shell must approach its own threshold before the collective effect occurs.
The near-coincident limit has a simple interpretation.
When $R_2\downarrow R_1$, the two equal delta layers act as one layer with twice the strength, consistently giving $a_{\rm c}\to1/2$.
In the opposite weak-overlap limit $\chi\to0$, one obtains $a_{\rm c}\to1$, so the two shells recover their separate one-shell thresholds.

Let $c=(c_1,c_2)^{\mathrm T}$ be the vector of shell values, where the superscript $\mathrm T$ denotes transpose.
The vector $v=\Gamma^{1/2}c$ is the symmetrized Birman--Schwinger vector.
At equal normalized strength, the critical vector $v$ is proportional to $(1,1)^{\mathrm T}$.
Thus the first collective mode has the same sign on the two shells after Birman--Schwinger weighting.
It is the analog of the lower in-phase state of an attractive double well.

Figure~\ref{fig:phase} shows the critical curve for $R_1=4M$ and $R_2=8M$.
For this choice, $\chi=0.4150$ and $a_{\rm c}=0.6082$.
%%%
\begin{figure}[htbp]
\centering
\begin{tikzpicture}
\begin{axis}[
 width=7.2cm,height=6.4cm,
 xmin=0,xmax=1,ymin=0,ymax=1,
 xlabel={$a_1$},ylabel={$a_2$},
 axis lines=left,
 samples=180,
 tick label style={font=\small},
 label style={font=\small}
]
 \addplot[name path=curve,domain=0:1,black,thick]
 {(1-x)/(1-(1-0.4150374993)*x)};
 \path[name path=top] (axis cs:0,1)--(axis cs:1,1);
 \addplot[black!12] fill between[of=curve and top];
 \addplot[only marks,mark=*,mark size=1.6pt]
 coordinates {(0.6081859244,0.6081859244)};
 \draw[dashed] (axis cs:0.6081859244,0)--(axis cs:0.6081859244,0.6081859244);
 \draw[dashed] (axis cs:0,0.6081859244)--(axis cs:0.6081859244,0.6081859244);
 \node[font=\small,anchor=east] at (axis cs:0.94,0.82) {collective instability};
 \node[font=\small] at (axis cs:0.34,0.28) {stable};
\end{axis}
\end{tikzpicture}
\caption{The collective threshold in the square $0\leq a_1,a_2\leq1$ for $R_1=4M$ and $R_2=8M$.
The curve is Eq.~\eqref{eq:criticalcurve} with $\chi=0.4150$.
The shaded region has $\Delta_0<0$ and contains a growing $s$-wave mode.}
\label{fig:phase}
\end{figure}
%%%%%%%%%%%%%%%%%%%%%%%%%
\subsection{Closed form of the critical scalar cloud}

The determinant condition also determines the static scalar field at
the onset in elementary form.
Let $L(r)=-\log f(r)$ for $r>2M$.

\begin{proposition}[Critical scalar cloud]
Assume $0<a_1<1$, $0<a_2<1$, and $\Delta_0=0$.
Let
$$
c=
\left(
\begin{array}{c}
c_1\\
c_2
\end{array}
\right)
$$
be a positive vector satisfying $K_0(0)c=0$, and set
\begin{equation}
q_j=g_jc_j>0.
\label{eq:qjcritical}
\end{equation}
Up to an overall normalization, the critical static scalar field is
\begin{equation}
\varphi_{\rm c}(r)
=\left\{
\begin{array}{ll}
\displaystyle
\frac{q_1R_1L_1+q_2R_2L_2}{2M},
&2M<r<R_1,\\
\displaystyle
\frac{q_1R_1L(r)+q_2R_2L_2}{2M},
&R_1<r<R_2,\\
\displaystyle
\frac{(q_1R_1+q_2R_2)L(r)}{2M},
&r>R_2.
\end{array}
\right.
\label{eq:criticalprofile}
\end{equation}
It is regular at the future horizon, is strictly positive, and satisfies
\begin{equation}
\varphi_{\rm c}(r)
=\frac{Q_{\rm c}}{r}+O(r^{-2}),
\qquad
Q_{\rm c}=q_1R_1+q_2R_2>0.
\label{eq:criticalcharge}
\end{equation}
\end{proposition}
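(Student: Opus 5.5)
The plan is to reconstruct the threshold solution from the proof of the boundary determinant theorem at $\kappa=0$, using the threshold kernel of Lemma~\ref{lem:threshold-kernel}. Since $\Delta_0=0$ with $0<a_j<1$, Theorem~\ref{thm:collective-threshold} gives a one-dimensional space of $s$-wave threshold solutions, identified with $\ker K_0(0)$. I will first check that a positive $c$ exists. Write $v=\Gamma^{1/2}c$, so that $B_0v=v$. The matrix $B_0$ has strictly positive entries because $a_j>0$ and $\chi>0$, so its eigenvalue $1$ is the largest one (the proof of Theorem~\ref{thm:collective-threshold} shows it is simple). Perron--Frobenius then gives a positive eigenvector. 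Hence $c$ can be taken positive and $q_j=g_jc_j>0$.

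Next I write the radial function as
$$u(x)=-\sum_j\lambda_jc_jG_0^{\BH}(x,x_j;0)=\sum_j q_j\,y_{\rm L}(r_<)\,y_{\rm R}(r_>).$$
The threshold form of the argument in Theorem~\ref{thm:boundary-determinant} shows that $u$ satisfies the jump conditions \eqref{eq:jump} and has shell values $c_j$, because $K_0(0)c=0$. The scalar field is then $\varphi_{\rm c}=u/r$. In each region I substitute $y_{\rm L}=r$ and $y_{\rm R}=rL(r)/(2M)$:
\begin{itemize}
\item for $r<R_1$, both shells are outside, so each term equals $r\,R_jL_j/(2M)$;
\item for $R_1<r<R_2$, shell 1 contributes $R_1\,rL(r)/(2M)$ and shell 2 contributes $r\,R_2L_2/(2M)$;
\item for $r>R_2$, both shells are inside, so the terms sum to $(q_1R_1+q_2R_2)\,rL(r)/(2M)$.
\end{itemize}
Dividing by $r$ gives Eq.~\eqref{eq:criticalprofile}.

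It remains to verify the three stated properties.
\begin{itemize}
\item \emph{Regularity.} On $2M<r<R_1$ the field is constant, so it is smooth across the future horizon.
\item \emph{Positivity.} Every $q_j>0$ and $L(r)>0$ for $r>2M$, so each piece is strictly positive.
\item \emph{Asymptotics.} As $r\to\infty$, $L(r)=-\log(1-2M/r)=2M/r+O(r^{-2})$. This gives $\varphi_{\rm c}=(q_1R_1+q_2R_2)/r+O(r^{-2})$, which is Eq.~\eqref{eq:criticalcharge} with $Q_{\rm c}>0$.
\end{itemize}
As a consistency check, continuity at $R_1$ and $R_2$ is immediate from the formulas. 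The flux jumps $[r^2f\varphi']$ reproduce $-s_j$ times the shell values: for example, at $R_2$ the jump is $-q_2R_2\cdot$const. This is the content of $K_0(0)c=0$, namely $c_1=q_1G_{11}+q_2G_{12}$ and $c_2=q_1G_{12}+q_2G_{22}$, matched with the values $\varphi_{\rm c}(R_j)R_j=c_j$.

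The main obstacle is bookkeeping rather than analysis. The first point is to make sure the vector $c$ in $\ker K_0(0)$ is the vector of values of $u=r\varphi$, not of $\varphi$. The second is to keep the factor $1/(2M)$ and the normalization of $G_0^{\BH}$ from Lemma~\ref{lem:threshold-kernel} consistent between $u$ and $\varphi$. The statement fixes everything only "up to an overall normalization", so any stray constant factor common to all three pieces is harmless. Only the relative weights $q_1R_1$ and $q_2R_2$ matter, and these follow directly from the kernel.
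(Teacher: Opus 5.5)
Your proposal is correct and follows the paper's proof essentially step by step. You apply Perron--Frobenius to $B_0$ to make $c$ (and hence each $q_j$) positive, form the superposition $u_{\rm c}=\sum_j q_jG_0^{\BH}(x,x_j;0)$ of threshold kernels, substitute $y_{\rm L}=r$ and $y_{\rm R}=rL(r)/(2M)$ region by region, and read off regularity, positivity, and the $1/r$ tail from $L(r)=2M/r+O(r^{-2})$. One small wording fix: the fact that $1$ is the largest eigenvalue of $B_0$ comes from $\det(I_2-B_0)=\Delta_0=0$ together with the trace bound $a_1+a_2<2$, since the other eigenvalue is $a_1+a_2-1<1$. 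Positivity of the entries does not give this; it is what supplies the positive eigenvector once $1$ is known to be the Perron root.
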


\begin{proof}
The boundary equation is
$$
c=\mathsf M_0(0)\Gamma c,
\qquad
\Gamma=\diag(g_1,g_2).
$$
After multiplication by $\Gamma^{1/2}$, the vector $v=\Gamma^{1/2}c$ satisfies $B_0v=v$.
At the critical point, $1$ is the simple largest eigenvalue of the matrix $B_0$, whose entries are strictly positive.
Its eigenvector can therefore be chosen with both components positive, and the same is true of $c=\Gamma^{-1/2}v$.
Hence $q_1,q_2>0$.

The radial function constructed from the threshold Green kernel is
$$
u_{\rm c}(x)=\sum_{j=1}^2q_jG_0^{\BH}(x,x_j;0).
$$
Substitution of Eq.~\eqref{eq:zerogreen} and division by $r$ give Eq.~\eqref{eq:criticalprofile} in the three radial regions.
The Green construction also gives continuity and the derivative jumps at both shells.
Since $L(r)>0$, the profile is strictly positive and has no node.
Finally, $L(r)=2M/r+O(r^{-2})$ gives Eq.~\eqref{eq:criticalcharge}.
\end{proof}

The radial function $u_{\rm c}=r\varphi_{\rm c}$ tends to a nonzero constant both as $x\to-\infty$ and as $x\to+\infty$.
It is therefore bounded but not square integrable.
The onset state is a zero-energy resonance rather than an $L^2$ eigenfunction.
When the critical curve is crossed toward $\Delta_0<0$, this resonance moves to a negative eigenvalue and becomes a square-integrable growing mode.

For equal normalized strengths, the symmetrized critical vector is proportional to $(1,1)^{\mathrm T}$.
The corresponding cloud is the weighted in-phase combination of the two shell responses.
Figure~\ref{fig:criticalprofile} shows the exact profile for $R_1=4M$, $R_2=8M$, and $a_1=a_2=a_{\rm c}$.

\begin{figure}[htbp]
\centering
\begin{tikzpicture}
\begin{axis}[
 width=10.0cm,height=6.1cm,
 xmin=2,xmax=20,ymin=0.2,ymax=1.12,
 xlabel={$r/M$},ylabel={$\varphi_{\rm c}(r)/\varphi_{\rm c}(2M)$},
 axis lines=left,
 tick label style={font=\small},
 label style={font=\small}
]
 \addplot[black,thick,domain=2.01:4,samples=2] {1};
 \addplot[black,thick,domain=4:8,samples=100]
 {((4/sqrt(5.5451774445))*(-ln(1-2/x))+(8/sqrt(9.2058263185))*(-ln(0.75)))/2)/
  (((4/sqrt(5.5451774445))*(-ln(0.5))+(8/sqrt(9.2058263185))*(-ln(0.75)))/2)};
 \addplot[black,thick,domain=8:20,samples=120]
 {(((4/sqrt(5.5451774445))+(8/sqrt(9.2058263185)))*(-ln(1-2/x))/2)/
  (((4/sqrt(5.5451774445))*(-ln(0.5))+(8/sqrt(9.2058263185))*(-ln(0.75)))/2)};
 \draw[dashed] (axis cs:4,0.2)--(axis cs:4,1.05);
 \draw[dashed] (axis cs:8,0.2)--(axis cs:8,1.05);
 \node[font=\small] at (axis cs:4,1.055) {$R_1$};
 \node[font=\small] at (axis cs:8,1.055) {$R_2$};
\end{axis}
\end{tikzpicture}
\caption{The normalized critical scalar cloud from Eq.~\eqref{eq:criticalprofile} for $R_1=4M$, $R_2=8M$, and equal normalized strengths.
The field is constant between the horizon and the inner shell, changes its radial slope at each shell, and decays as $1/r$ outside the outer shell.}
\label{fig:criticalprofile}
\end{figure}

\subsection{Relation to shell mass, pressure, and the probe limit}
\label{subsec:physical-parameters}

The normalized strength has a direct expression in terms of the surface trace.
From Eqs.~\eqref{eq:singularmass}, \eqref{eq:G11zero}, and \eqref{eq:aj}, one obtains
\begin{equation}
a_j
=\frac{2\pi\beta\tau_jR_j^2\sqrt{f(R_j)}L_j}{M}
\label{eq:atrace}
\end{equation}
under the attractive sign $\beta\tau_j>0$.
For a pressureless surface source with $\beta<0$, the total proper energy is
$$
m_j=4\pi R_j^2\sigma_j.
$$
Then
\begin{equation}
a_j
=\frac{|\beta|m_j}{2M}\sqrt{f(R_j)}L_j.
\label{eq:amass}
\end{equation}
Equation~\eqref{eq:amass} separates the theory coupling, the shell mass ratio, and the redshift geometry.
At fixed normalized strength it gives
\begin{equation}
\frac{|\beta|m_j}{M}
=\frac{2a_j}{\sqrt{f(R_j)}L_j}.
\label{eq:betamass}
\end{equation}
For $R_j=O(M)$ away from an extremely near-horizon position, the right side is of order one.
Thus a shell mass ratio much smaller than one requires a correspondingly large effective coupling if a pressureless source is to reach the onset while the metric remains close to Schwarzschild.
Surface pressure can change this estimate because the relevant quantity is $\tau_j=-\sigma_j+2p_j$, not the rest mass alone.

We now compare this prescribed-source conversion with a static shell satisfying the Israel conditions.
Consider first one static shell at radius $R$ between Schwarzschild regions with masses $M_-$ and $M_+=M_-+\delta M$.
Writing $f_\pm=1-2M_\pm/R$, the Israel equations give
\begin{eqnarray}
\sigma&=&\frac{\sqrt{f_-}-\sqrt{f_+}}{4\pi R},
\label{eq:israel-sigma}\\
p&=&\frac{1}{8\pi R}
\left\{
\frac{1-M_+/R}{\sqrt{f_+}}
-\frac{1-M_-/R}{\sqrt{f_-}}
\right\}.
\label{eq:israel-pressure}
\end{eqnarray}
For $\delta M/M\ll1$, with $M_-=M$ at leading order, these relations become
\begin{eqnarray}
\sigma&=&\frac{\delta M}{4\pi R^2\sqrt{f(R)}}
+O\left(\frac{\delta M^2}{R^3}\right),
\nonumber\\
p&=&\frac{\delta M M}{8\pi R^3f(R)^{3/2}}
+O\left(\frac{\delta M^2}{R^3}\right).
\label{eq:israel-probe-expansion}
\end{eqnarray}
Consequently, if $m=4\pi R^2\sigma$ denotes the leading proper shell energy,
\begin{equation}
\frac{p}{\sigma}
=\frac{M}{2Rf(R)}+O\left(\frac{m}{M}\right),
\qquad
\tau
=-\sigma\left\{1-\frac{M}{Rf(R)}\right\}
+O\left(\frac{m^2}{M R^2}\right).
\label{eq:israel-trace-expansion}
\end{equation}
For two light shells, the mass already carried by the other shell changes these formulas only at the next order in the total shell mass ratio.
Thus Eq.~\eqref{eq:israel-trace-expansion} may be applied to each shell at the order retained in the present probe calculation.

For the standard sign $\beta<0$, such a static shell is attractive when $R>3M$, is trace free at $R=3M$, and gives a repulsive scalar delta term for $2M<R<3M$.
Thus the exact junction condition in Eq.~\eqref{eq:exact-trace-zero} reduces, in the leading small-mass limit, to a trace-sign transition exactly at the Schwarzschild photon sphere.
There is also a physically different attractive regime.
For
$$
\frac{5M}{2}\leq R<3M,
$$
Eq.~\eqref{eq:israel-trace-expansion} gives $1/2<p/\sigma\leq1$.
Hence the surface dominant energy condition $\sigma\geq|p|$ is satisfied, while $\tau>0$ and a positive coupling $\beta>0$ produces an attractive scalar interaction.
For $R_j>3M$, substitution into Eq.~\eqref{eq:atrace} gives the leading mechanically supported-shell conversion
\begin{equation}
a_j
=\frac{|\beta|m_j}{2M}
\left\{1-\frac{M}{R_jf(R_j)}\right\}
\sqrt{f(R_j)}L_j
+O\left(|\beta|\epsilon_{\rm sh}^2\right).
\label{eq:aisraelmass}
\end{equation}
Equation~\eqref{eq:amass} is used only to convert the surface trace into a mass scale.
It does not describe a static unsupported dust shell.

For example, take the equal-strength threshold for $R_1=4M$ and $R_2=8M$.
Equations~\eqref{eq:equalcritical} and \eqref{eq:betamass} give
\begin{equation}
a_{\rm c}=0.608186,
\qquad
\frac{|\beta|m_1}{M}=2.482,
\qquad
\frac{|\beta|m_2}{M}=4.882.
\label{eq:physical-example}
\end{equation}
Thus, if the proper energy of each pressureless source is one percent of the black-hole mass, the critical values inferred separately from the two shell locations are of order $|\beta|=248$ and $|\beta|=488$.
These are two separate conversions; equal shell masses with a common
value of $\beta$ do not give equal normalized strengths.
The normalized collective reduction is substantial, but the coupling required by a light pressureless source can still be large.
These numbers are the pressureless-source conversion.
If instead the leading static Israel pressure in Eq.~\eqref{eq:israel-trace-expansion} is imposed, the corresponding products become
\begin{equation}
\frac{|\beta|m_1}{M}=4.964,
\qquad
\frac{|\beta|m_2}{M}=5.859
\label{eq:physical-israel-example}
\end{equation}
for $R_1=4M$ and $R_2=8M$.
This example is not an observational constraint on $\beta$ because the thin-shell model does not specify an astrophysical matter configuration.
It instead quantifies the physical cost of the double scaling used here.
More generally, the mass estimate alone is insufficient: the instability is controlled by the full surface trace.

A phase diagram in physical coupling--mass variables makes this distinction explicit.
Set
$$
\mu_j=\frac{|\beta|m_j}{M},
\qquad
a_j=c_j^{\rm I}\mu_j,
\qquad
c_j^{\rm I}
=\frac{1}{2}
\left(1-\frac{M}{R_jf(R_j)}\right)\sqrt{f(R_j)}L_j,
$$
for the leading static Israel conversion with $R_j>3M$.
For $R_1=4M$ and $R_2=8M$,
$$
c_1^{\rm I}=0.122532,
\qquad
c_2^{\rm I}=0.103808.
$$
Figure~\ref{fig:physical-phase} shows Eq.~\eqref{eq:criticalcurve} in the variables $(\mu_1,\mu_2)$.
The isolated thresholds are $\mu_{1,{\rm c}}=8.161$ and $\mu_{2,{\rm c}}=9.633$, while the equal-normalized-strength collective point is $(4.963,5.859)$.

\begin{figure}[htbp]
\centering
\begin{tikzpicture}
\begin{axis}[
 width=8.1cm,height=6.5cm,
 xmin=0,xmax=8.1612,ymin=0,ymax=9.6333,
 xlabel={$\mu_1=|\beta|m_1/M$},
 ylabel={$\mu_2=|\beta|m_2/M$},
 axis lines=left,
 samples=180,
 tick label style={font=\small},
 label style={font=\small}
]
 \addplot[name path=physcurve,domain=0:8.16111557,black,thick]
 {(1-0.1225322679*x)/(0.1038083262*(1-(1-0.4150374993)*0.1225322679*x))};
 \path[name path=phystop] (axis cs:0,9.63313865)--(axis cs:8.16111557,9.63313865);
 \addplot[black!12] fill between[of=physcurve and phystop];
 \addplot[only marks,mark=*,mark size=1.6pt]
 coordinates {(4.96347562,5.85873934)};
 \node[font=\small] at (axis cs:6.1,7.8) {collective instability};
 \node[font=\small] at (axis cs:2.7,2.6) {stable};
\end{axis}
\end{tikzpicture}
\caption{The collective threshold in the physical coupling--mass variables for leading static Israel shells at $R_1=4M$ and $R_2=8M$.
The shaded region is unstable in the $s$-wave sector.
The marked point has equal normalized strengths $a_1=a_2=a_{\rm c}$, not equal coupling--mass products.}
\label{fig:physical-phase}
\end{figure}

The exact threshold is a statement about the linear operator for a prescribed integrated trace.
It is not by itself a claim that a given astrophysical shell and a given observationally allowed scalar-tensor theory lie in the unstable region.
A self-consistent model must solve the shell mechanics and the junction conditions together with the scalar field.

\section{Representative thresholds, numerical check, and physical scope}
\label{sec:representative-thresholds}

\subsection{Representative thresholds}
\label{subsec:representative-thresholds}

Table~\ref{tab:thresholds} gives the geometric overlap and the equal-strength collective threshold for several shell radii.
No radial integration is required for these values.
They follow directly from Eqs.~\eqref{eq:chi} and \eqref{eq:equalcritical}.

\begin{table}[H]
\centering
\caption{Equal normalized collective thresholds and the corresponding pressureless-source coupling--mass products.
The last two columns give the diagnostic conversion $|\beta|m_j/M$ from Eq.~\eqref{eq:betamass} at $a_1=a_2=a_{\rm c}$.
A mechanically supported static shell has the pressure correction in Eq.~\eqref{eq:aisraelmass}. All listed inner radii exceed $3M$, so the leading static trace has the standard attractive sign for $\beta<0$.}
\label{tab:thresholds}
\begin{tabular}{cccccc}
\toprule
$R_1/M$ & $R_2/M$ & $\chi$ & $a_{\rm c}$
& $|\beta|m_1/M$ & $|\beta|m_2/M$\\
\midrule
$3.5$ & $6$ & $0.4785$ & $0.5911$ & $2.131$ & $3.571$\\
$4$ & $6$ & $0.5850$ & $0.5666$ & $2.312$ & $3.423$\\
$4$ & $8$ & $0.4150$ & $0.6082$ & $2.482$ & $4.882$\\
$4$ & $12$ & $0.2630$ & $0.6610$ & $2.697$ & $7.943$\\
$6$ & $10$ & $0.5503$ & $0.5741$ & $3.468$ & $5.753$\\
\bottomrule
\end{tabular}
\end{table}

The values show that the collective reduction can be large in normalized units.
For example, shells at $4M$ and $6M$ need only about $56.7$ percent of their separate critical strengths when the normalized strengths are equal.
The reduction becomes smaller when the outer shell is moved far away, because the static scalar overlap decreases.
The last two columns also show that moving the outer shell farther away raises the coupling--mass product needed to keep it at the same normalized threshold.
Thus a large reduction of $a_{\rm c}$ does not by itself imply a small
value of $|\beta|m_j/M$.

\subsection{Numerical check of the finite growth rate}
\label{subsec:numerical-growth-rate}

As a numerical check, we calculate the finite growth rate beyond the analytic threshold.

We use
$$
M=1,
\qquad
R_1=4M,
\qquad
R_2=8M,
\qquad
a_1=a_2=a.
$$
For these radii,
$$
\chi=\frac{\log f(R_2)}{\log f(R_1)}=0.415037,
\qquad
a_{\rm c}=\frac{1}{1+\sqrt{\chi}}=0.608186.
$$
For each $a$, we solve
$$
D_0(\kappa)
=\left(1-g_1G_{11}(\kappa)\right)
\left(1-g_2G_{22}(\kappa)\right)
-g_1g_2G_{12}(\kappa)^2=0,
$$
where
$$
G_{ij}(\kappa)=G_0(x_i,x_j;-\kappa^2),
\qquad
g_j=\frac{a}{G_{jj}^{(0)}}.
$$
The left solution of $u''=(V_0^{\BH}+\kappa^2)u$ is integrated from the horizon-side cutoff with $u'/u=\kappa$.
The right solution is integrated from the infinity-side cutoff with $u'/u=-\kappa$.
Here the prime denotes differentiation with respect to $x$.
The Green matrix is then obtained from Eq.~\eqref{eq:greennegative}.
For the plotted calculation, the radial cutoffs were $r_{\rm L}=2M+10^{-7}M$ and $r_{\rm R}=\max\{2000M,4/\kappa\}$.
The integrations were performed with a DOP853 adaptive eighth-order Runge--Kutta method, using relative tolerance $2\times10^{-10}$ and absolute tolerance $2\times10^{-12}$.
Replacing $4/\kappa$ by $6/\kappa$ changed the displayed roots by less than $10^{-9}/M$.
The positive zero of the determinant was found by Brent's bracketed root method, with the exact value of $D_0(0)$ used at the lower endpoint.
For example,
$$
\kappa M=0.0183522\quad\hbox{at }a=0.70,
\qquad
\kappa M=0.0492376\quad\hbox{at }a=0.90.
$$

Figure~\ref{fig:finite-growth-rate} shows the result.

\begin{figure}[tb]
\centering
\begin{tikzpicture}
\begin{axis}[
 width=9.4cm,height=6.2cm,
 xmin=0.60,xmax=0.925,ymin=0,ymax=0.055,
 xlabel={$a=a_1=a_2$},ylabel={$\kappa M$},
 axis lines=left,
 grid=both,
 major grid style={black!18},
 minor grid style={black!8},
 tick label style={font=\small},
 label style={font=\small}
]
 \addplot[black,thick,mark=o,mark size=1.7pt]
 coordinates {
 (0.615,0.0015886693)
 (0.625,0.0038097548)
 (0.640,0.0069713443)
 (0.660,0.0109542934)
 (0.680,0.0147342982)
 (0.700,0.0183522139)
 (0.720,0.0218349554)
 (0.740,0.0252018157)
 (0.760,0.0284673924)
 (0.780,0.0316431594)
 (0.800,0.0347383945)
 (0.820,0.0377607673)
 (0.840,0.0407167306)
 (0.860,0.0436117928)
 (0.880,0.0464507134)
 (0.900,0.0492376466)
 (0.920,0.0519762513)
 };
 \draw[dashed] (axis cs:0.6081859244,0)--(axis cs:0.6081859244,0.055);
 \node[rotate=90,font=\scriptsize,anchor=south] at (axis cs:0.6081859244,0.027)
 {$a_{\rm c}=0.608186$};
\end{axis}
\end{tikzpicture}
\caption{Dimensionless growth rate $\kappa M$ of the unstable $s$-wave mode for
$R_1=4M$, $R_2=8M$, and $a_1=a_2=a$.
The vertical dashed line indicates the exact analytic threshold
$a_{\rm c}=0.608186$.
A positive root of $D_0(\kappa)=0$ appears for $a>a_{\rm c}$ and approaches zero as
$a\downarrow a_{\rm c}$.}
\label{fig:finite-growth-rate}
\end{figure}

No positive root was found below $a_{\rm c}$, while one positive root
appears above it and tends to zero at the threshold.
This agrees with Theorem~\ref{thm:collective-threshold}.

\subsection{Physical scope}
\label{subsec:physical-scope}

The model has the following limitations.
First, the analysis concerns the linear scalar onset around a scalar-free branch.
It does not establish the existence or stability of a nonlinear two-shell scalarized branch.
Second, Sec.~\ref{sec:exact-background} constructs the exact scalar-free geometry and the surface stresses required by the two Israel junctions, but it does not provide a microscopic material equation of state or establish radial mechanical stability.
The exact junction calculation and the scalar determinant play different roles: the former fixes the background and surface tensor, while the latter determines whether that scalar-free background has a tachyonic scalar mode.
Third, the model is spherical.
An accretion disk or a torus has angular structure and couples different partial waves.
The three-dimensional boundary operator for nonconstant shell strengths in Ref.~\cite{Kaminaga2026} gives a possible starting point for that problem.
Fourth, rotation is not included.
For Kerr black holes, the growing-mode and superradiant channels can coexist \cite{Cardoso2013PRD}, and the simple angular momentum ordering used here is lost.
Finally, the large negative values of $\beta$ required in the standard-sign light-shell examples should not be interpreted as phenomenologically viable in the original massless Damour--Esposito-Far\`ese model, whose negative-$\beta$ region is tightly constrained by binary-pulsar observations \cite{Doneva2024}. The phenomenological status of the positive-$\beta$ regime discussed above is different and is not assessed here.
The result applies to other scalar-matter theories only when their
weak-field constraints permit the required coupling.

The delta model should be read as a thin-layer limit with fixed integrated trace, rather than as the claim that realistic matter has zero thickness.
The elementary resistance kernel in Sec.~\ref{sec:exact-background} assumes that the bulk scalar effective-mass term vanishes away from the shell surfaces.
For finite-width layers, one must solve the zero-frequency radial equation with the corresponding smooth effective-mass profile.
The delta-shell formula should be recovered when $w_j$ is short compared with the radial scale of the critical cloud and when
$$
\int T_j(\zeta)\dd\zeta=\tau_j
$$
is kept fixed.
A numerical comparison with smooth layers would test this thin-layer limit.
No finite-width result is used in the present derivation.
The closed profile in Eq.~\eqref{eq:criticalprofile} can be used as the reference solution for this comparison.

\section{Conclusion}

We have derived the collective linear onset of matter-induced scalarization for a black hole surrounded by two thin matter shells.
The scalar perturbation decouples at first order because $\Phi_0=0$ and $\alpha(0)=0$.
The shell source is fixed by the invariant surface trace, not by an arbitrary potential parameter.

For a general static spherical black-hole exterior with no bulk scalar effective-mass term in the open regions between the shells, the relevant geometric quantity is the scalar resistance
$$
{\cal S}(R)=\int_R^\infty
\frac{\dd r}{r^2\sqrt{h(r)f(r)}}.
$$
The exact rank-two onset condition is
$$
1-a_1-a_2+(1-\chi)a_1a_2=0,
\qquad
\chi=\frac{{\cal S}(R_2)}{{\cal S}(R_1)}.
$$
This formula applies both to the Schwarzschild probe problem and to the exact scalar-free spacetime formed by three Schwarzschild regions joined by two Israel shells.
The explicit comparison in Table~\ref{tab:exact-probe-comparison} shows that the probe threshold receives the expected linear correction in the shell mass ratio.
Thus the collective determinant is not an artifact of omitting metric backreaction.

In the Schwarzschild limit,
$$
{\cal S}(R)=-\frac{1}{2M}\log\left(1-\frac{2M}{R}\right),
$$
so the geometric overlap is
$$
\chi=\frac{\log f(R_2)}{\log f(R_1)}.
$$
Two shells with $a_1<1$ and $a_2<1$ can therefore destabilize the scalar-free black hole as a pair.
For equal normalized strengths, the threshold is $a_{\rm c}=1/(1+\sqrt{\chi})$.
The critical scalar cloud is positive, regular at the horizon, and decays as $1/r$.
It is a zero-energy resonance that becomes a square-integrable growing mode beyond threshold, as confirmed by the finite-$\kappa$ calculation.

The Israel-shell conversion gives two additional physical conclusions.
First, a strong reduction of the normalized threshold need not imply a moderate coupling for a light matter layer.
Second, the exact Israel trace changes sign, for a nontrivial shell, at $\sqrt{f_-(R)f_+(R)}=1/3$.
Its light-shell limit is $R=3M$, the Schwarzschild photon sphere.
For $5M/2\leq R<3M$ in that limit, a static shell can satisfy the surface dominant energy condition while having positive trace, so a positive scalar-matter coupling becomes attractive.

The present paper determines the onset and its geometric origin.
The next problem is to continue the threshold cloud to a nonlinear scalarized two-shell branch and to combine that continuation with a material equation of state and radial-stability analysis.
Possible extensions include finite-width layers, nonspherical matter, and rotation.
These problems require the corresponding generalization of the static Green function and the boundary operator.

\section*{Statements and Declarations}

\subsection*{Competing Interests}
The author declares that he has no competing interests.

\subsection*{Funding}
No funding was received for conducting this study.

\subsection*{Data Availability}
No external datasets were used in this study.
The numerical results can be reproduced from the equations, algorithms,
tolerances, and parameters given in the paper.

\subsection*{Author Contributions}
The author contributed to all parts of the manuscript.


\begin{thebibliography}{99}

\bibitem{Damour1993}
T. Damour and G. Esposito-Far\`ese,
Nonperturbative strong-field effects in tensor-scalar theories of gravitation,
Phys. Rev. Lett. 70 (1993) 2220--2223.
\url{https://doi.org/10.1103/PhysRevLett.70.2220}.

\bibitem{Damour1996}
T. Damour and G. Esposito-Far\`ese,
Tensor-scalar gravity and binary-pulsar experiments,
Phys. Rev. D 54 (1996) 1474--1491.
\url{https://doi.org/10.1103/PhysRevD.54.1474}.

\bibitem{Harada1997}
T. Harada,
Stability analysis of spherically symmetric star in scalar-tensor theories of gravity,
Prog. Theor. Phys. 98 (1997) 359--379.
\url{https://doi.org/10.1143/PTP.98.359}.

\bibitem{Doneva2024}
D. D. Doneva, F. M. Ramazano\u{g}lu, H. O. Silva, T. P. Sotiriou, and S. S. Yazadjiev,
Spontaneous scalarization,
Rev. Mod. Phys. 96 (2024) 015004.
\url{https://doi.org/10.1103/RevModPhys.96.015004}.

\bibitem{Sotiriou2012}
T. P. Sotiriou and V. Faraoni,
Black holes in scalar-tensor gravity,
Phys. Rev. Lett. 108 (2012) 081103.
\url{https://doi.org/10.1103/PhysRevLett.108.081103}.

\bibitem{Cardoso2013PRD}
V. Cardoso, I. P. Carucci, P. Pani, and T. P. Sotiriou,
Matter around Kerr black holes in scalar-tensor theories: scalarization and superradiant instability,
Phys. Rev. D 88 (2013) 044056.
\url{https://doi.org/10.1103/PhysRevD.88.044056}.

\bibitem{Cardoso2013PRL}
V. Cardoso, I. P. Carucci, P. Pani, and T. P. Sotiriou,
Black holes with surrounding matter in scalar-tensor theories,
Phys. Rev. Lett. 111 (2013) 111101.
\url{https://doi.org/10.1103/PhysRevLett.111.111101}.

\bibitem{Tanaka2025}
J. Tanaka,
Scalarization and superradiant instability of black hole induced by dark matter halo in the scalar-tensor theory of gravity,
Phys. Rev. D 112 (2025) 064027.
\url{https://doi.org/10.1103/czh8-8gfh}.

\bibitem{Sadjadi2026}
H. Mohseni Sadjadi and M. Navid Gasemi Zad,
Mass-varying dark matter induced scalarization and scalar clouds around black holes,
arXiv:2606.29864 [gr-qc] (2026).

\bibitem{GasemiZad2026}
M. Navid Gasemi Zad and H. Mohseni Sadjadi,
Spontaneous scalarization around a black hole in a dark matter halo,
arXiv:2607.17367 [gr-qc] (2026).

\bibitem{Laeuger2025}
A. Laeuger, C. Weller, D. Li, and Y. Chen,
Ringdown of a black hole surrounded by a thin shell of matter,
Phys. Rev. D 112 (2025) 084042.
\url{https://doi.org/10.1103/gkj4-m1c1}.

\bibitem{KaminagaEcho2026}
M. Kaminaga,
Black-hole echo resonance spectra and source dependence in a controlled transfer-function model,
Int. J. Mod. Phys. D 35 (2026) 2650053.
\url{https://doi.org/10.1142/S0218271826500537}.

\bibitem{Israel1966}
W. Israel,
Singular hypersurfaces and thin shells in general relativity,
Nuovo Cimento B 44 (1966) 1--14;
Erratum, Nuovo Cimento B 48 (1967) 463.
\url{https://doi.org/10.1007/BF02710419};
Erratum: \url{https://doi.org/10.1007/BF02712210}.

\bibitem{AcunaCardenas2024}
R. O. Acu\~na-C\'ardenas, O. Sarbach, and L. Tessieri,
Wave propagation through a spacetime containing thin concentric shells of matter,
Phys. Rev. D 110 (2024) 104064.
\url{https://doi.org/10.1103/PhysRevD.110.104064}.

\bibitem{Cardoso2020}
V. Cardoso, A. Foschi, and M. Zilh\~ao,
Collective scalarization or tachyonization: when averaging fails,
Phys. Rev. Lett. 124 (2020) 221104.
\url{https://doi.org/10.1103/PhysRevLett.124.221104}.

\bibitem{Albeverio2005}
S. Albeverio, F. Gesztesy, R. H\o egh-Krohn, and H. Holden,
Solvable Models in Quantum Mechanics, second edition,
AMS Chelsea Publishing, Providence, 2005.

\bibitem{Antoine1987}
J.-P. Antoine, F. Gesztesy, and J. Shabani, 
Exactly solvable models of sphere interactions in quantum mechanics, 
J. Phys. A: Math. Gen. 20 (1987) 3687--3712. 
\url{https://doi.org/10.1088/0305-4470/20/12/022}.

\bibitem{Shabani1988}
J. Shabani,
Finitely many delta interactions with supports on concentric spheres,
J. Math. Phys. 29 (1988) 660--664.
\url{https://doi.org/10.1063/1.528005}.

\bibitem{Behrndt2013}
J. Behrndt, M. Langer, and V. Lotoreichik,
Schr\"odinger operators with $\delta$ and $\delta'$-potentials supported on hypersurfaces,
Ann. Henri Poincar\'e 14 (2013) 385--423.
\url{https://doi.org/10.1007/s00023-012-0189-5}.

\bibitem{Hounkonnou1997}
M. N. Hounkonnou, M. Hounkpe, and J. Shabani,
Scattering theory for finitely many sphere interactions supported by concentric spheres,
J. Math. Phys. 38 (1997) 2832--2850.
\url{https://doi.org/10.1063/1.532022}.

\bibitem{Kaminaga2026}
M. Kaminaga,
Schr\"odinger operators with concentric $\delta$-shell interactions,
Anal. Math. Phys. 16 (2026), Article 68.
\url{https://doi.org/10.1007/s13324-026-01191-w}.

\bibitem{Berti2009}
E. Berti, V. Cardoso, and A. O. Starinets,
Quasinormal modes of black holes and black branes,
Class. Quantum Grav. 26 (2009) 163001.
\url{https://doi.org/10.1088/0264-9381/26/16/163001}.

\end{thebibliography}
\end{document}